\documentclass[11pt]{article}

\usepackage[alg]{KLMM/klmm}

\newcommand{\ZZ}{\mathbb Z}
\newcommand{\Z}{\mathbb Z}
\newcommand{\Fq}{\mathbb F_q}
\newcommand{\SoftO}{\widetilde O}
\newcommand{\softO}{\widetilde{O}}
\newcommand{\polylog}{\operatorname{polylog}}
\newcommand{\supp}{\operatorname{supp}}
\newcommand{\charac}{\operatorname{char}}
\newcommand{\norm}[1]{\left\lVert #1\right\rVert}
\newcommand{\Interpolate}{\textsc{Interpolate\_mbb}}

\def\R{{\mathbb{R}}}

\def\Z{{\mathbb{Z}}}

\def\R{{\mathcal{R}}}
\def\F{{\mathbb{F}}}

\def\Pr{{\mathbb{P}}}

\def\polylog{\hbox{\rm{polylog}}}

\usepackage{xcolor}

\title{Quasi-linear Time Multiplication of Sparse Polynomials with Integer Coefficients}
\author{
    Qiao-Long Huang\institution{Shandong University}
    \and Yichuan Cao\institution{State Key Laboratory of Mathematical Sciences, Academy of Mathematics and Systems Science, Chinese Academy of Sciences; University of Chinese Academy of Sciences}
    \and Ruichen Qiu\instref{2}
    \and Xiao-Shan Gao\instref{2}\Cauthor
}

\begin{document}
\maketitle

\begin{abstract}
\noindent
Sparse polynomial multiplication is a fundamental problem in computer algebra and the theory of computation, and the development of a quasi-linear time output-sensitive multiplication algorithm has been posed as an open challenge.
In this paper, a counterexample is provided to a previously claimed solution to this open problem for integer coefficients.
By employing the existing quasi-linear modular-black-box interpolation algorithm, we are able to provide an algorithm  with quasi-linear bit complexity for the integer coefficients setting.
Furthermore, in the case of coefficients over a finite field, we obtain an algorithm whose bit complexity is linear in the number of terms, the logarithm of the degree, and the logarithm of the size of the finite field.
\end{abstract}

\section{Introduction}
\label{sec-1}

Polynomial multiplication is a fundamental algebraic operation that appears in many scientific and engineering computations. Consequently, it is also a basic issue in computer algebra and theory of computation \cite{ModernCA,Jin2025STOC,NickFischer2025SODA}.
The classical multiplication algorithms for dense polynomials ~\cite{KaratsubaOfman1962,Toom1963,Cook1966,SchonhageStrassen1971,CantorKaltofen1991} run in quasi-linear time with respect to the polynomial degree and are therefore optimal, up to polylogarithmic factors.
%

In practice—and in computer algebra systems like Maple and Mathematica—polynomials are usually represented in a sparse format~\cite{SparsityChallenges}, meaning they are a list of their nonzero coefficients together with the associated exponents.
For a sparse univariate polynomial $f$ of degree $D$, containing $T$ nonzero terms and whose coefficients have a maximum absolute value of $H$, its bit size is tightly bounded by
$$\text{size}(f) = O(T(\log D+\log H))=O(T\log(DH)).$$
The challenge in sparse polynomial multiplication is that the product is output-sensitive; that is, the size of the result cannot be deduced solely from the sizes of the inputs can can be exponential in the input size, as shown by the following example.

\begin{example}[\cite{Grenet-H2026}]
\label{ex-01}
Let $f = \sum_{i=0}^{d-1} x^i$, $g_1 = \sum_{i=0}^{d-1} (x^{id+1} - x^{id})$ and $g_2 = \sum_{i=0}^{d-1} (x^{id+1} + x^{id})$, in $\mathbb{Z}[x]$.
Note that $g_1$ and $g_2$ have the same support. However, $f  g_1 = x^{d^2} - 1$ has  two terms, and $f  g_2 = x^{d^2} + \sum_{i=1}^{d^2-1} 2x^i + 1$ has $d^2+1$ terms.
\end{example}
Therefore, the dense multiplication algorithms fail to accurately reflect the true complexity of the sparse setting, and meaningful sparse multiplication algorithms must be output-sensitive. 
Open Problem 1 in \cite{Roche2018} calls for the development of a quasi-linear, output-sensitive algorithm for sparse polynomial multiplication over an arbitrary ring $\R$:

\begin{quote}
\textbf{Open Problem 1.} Develop an algorithm to multiply two sparse polynomials $f, g \in \R[x]\setminus\R$  using $ \widetilde{O}(T \log D)$  ring and bit operations, where $T$ is an upper bound for the number of terms in $f$,  $g$,  $fg$,  and $D$  is an upper bound on their degrees.
\end{quote}
If the coefficients  are integers, Open Problem 1 can be rephrased more explicitly as follows.

\begin{quote}
\textbf{Open Problem 2.} Develop an algorithm to multiply two sparse polynomials $f, g \in \Z[x]\setminus\Z$  using $\widetilde{O}(T \log (DH))$ bit operations,  
where $D = \max(\deg(f),\deg(g))$, $H = \max(\|f\|_\infty, \|g\|_\infty, \|fg\|_\infty)$, and $T = \max(\|f\|_0, \|g\|_0, \|fg\|_0))$.
\end{quote}
Since $B = T \log(DH)$ gives a tight upper bound on the bit-size of $f, g,$ and $fg$, an algorithm whose complexity is quasi-linear in $B$ is optimal, up to polylogarithmic factors in $B$.

There has been significant progress in sparse polynomial multiplication~\cite{Thesis-Arnold,9076060,Giorgi2020-ISSAC,GIORGI202398-JSC}.
In \cite{Giorgi2020-ISSAC}, the authors assert that Open Problem 2 has been positively resolved.
However, this paper presents a counterexample to a key result, Lemma 4.7(ii), of that work, thereby invalidating the claim.

In this paper, by employing the quasi-linear modular-black-box (MBB) interpolation algorithm from \cite{Giorgi2022-ISSAC}, we are able to confirm that Open Problem 2 holds true.

We further propose a sparse multiplication algorithm over the finite field $\F_q$, which relies on the sparse interpolation method from \cite{Huang2019-ISSAC}, achieving a bit complexity of $\widetilde{O}(T \log D \log q)$ when the characteristic of the finite field is large. This is the strongest result to the best of our knowledge.

Over arbitrary rings, Arnold presents a randomized output-sensitive algorithm that performs $\widetilde{O}(T\log^3 D)$ ring operations~\cite{Thesis-Arnold}, which, to the best of our knowledge, remains the strongest known result.

\subsection{Main Results}

The following multivariate version of Open Problem 2 has been proven.
\begin{theorem}
\label{th-m1}
There exists an algorithm that accepts two sparse polynomials $f, g \in \Z[x_1,\ldots,x_n]\setminus\Z$ as input and gives the product $fg$ using 
$\widetilde{O}(T(n\log(D)+ \log(H))$ bit operations,  
where 
\[
D = \max_{i=1}^n\{\deg(f,x_i),\deg(g,x_i)\}, 
H = \max(\|f\|_\infty, \|g\|_\infty), 
T = \max(\|f\|_0, \|g\|_0, \|fg\|_0)).
\]
\end{theorem}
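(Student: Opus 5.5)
The plan is to reduce multiplication to sparse interpolation of $h = fg$ through a \emph{modular black box}: a procedure that, given a prime $p$ and a point $\alpha\in\F_p$ (or more generally a ring element), returns $h(\alpha)\bmod p$ cheaply. Then we invoke the quasi-linear MBB interpolation algorithm of \cite{Giorgi2022-ISSAC}. That algorithm recovers a $T$-sparse integer polynomial of degree $<D$ with height $\le C$ from $\widetilde O(T\log(DC))$-ish black-box probes, each at a modulus of $O(\log(DC))$ bits. Its total cost is quasi-linear in $T\log(DC)$ plus the cost of answering the probes.

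\textbf{Step 1: Kronecker substitution.} Map $x_i \mapsto x^{(2D+1)^{i-1}}$. This turns $f,g$ into univariate $\tilde f,\tilde g$ of degree $<(2D+1)^n$. Since $\deg(fg,x_i)\le 2D$, the substitution is injective on the support of $fg$, so no collisions occur in the product. We have $\log \deg = O(n\log D)$, and the exponent conversion costs $\widetilde O(Tn\log D)$.

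\textbf{Step 2: Height bound.} The coefficients of $fg$ are bounded by $C = TH^2$ in absolute value. Hence $\log C = O(\log T + \log H)$, which fits inside the target bound $\widetilde O(T(n\log D+\log H))$.

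\textbf{Step 3: Answering probes.} A probe at $(p,\alpha)$ is answered by computing $\tilde f(\alpha)\tilde g(\alpha)\bmod p$. Computing $\tilde f(\alpha)$ requires the powers $\alpha^{e}$ for the $T$ exponents $e$, each of $O(n\log D)$ bits. Done naively, this costs $O(T n\log D)$ multiplications per point, which is too much because it is multiplied by the number of probes.

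\textbf{Step 4: Batching the probes (main obstacle).} The MBB algorithm queries at structured points, typically powers of a single root of unity $\omega$ of order $N$ modulo $p$, or a few such families. So the exponents can be reduced modulo $N$ first: each term $c x^{e}$ becomes $c x^{e\bmod N}$, giving a polynomial of degree $<N$. Evaluating at all powers of $\omega$ is then a dense DFT of length $N=\widetilde O(T)$, costing $\widetilde O(N)$ operations modulo $p$. The products $\tilde f\tilde g$ at those points come by pointwise multiplication. Reducing the $T$ exponents modulo $N$ costs $\widetilde O(Tn\log D)$ bits, and the coefficients modulo $p$ cost $\widetilde O(T\log H)$.

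If the algorithm instead uses several primes or moduli $(p_j,N_j)$, I would repeat this batching for each pair. I expect only $\polylog$ many pairs, each of bit size $O(\log(DC))$, which keeps the total at $\widetilde O(T(n\log D+\log H))$. The delicate point is to verify that the probe pattern of \cite{Giorgi2022-ISSAC} really is of this batched form, and that its stated cost charges $O(\log(DC))$-bit arithmetic per probe with only $\widetilde O(T)$ probes overall. If the probes are not batched, I would fall back on Lemma-style modular reduction $x^{N}-1$ of $\tilde f,\tilde g$ separately followed by dense multiplication modulo $p$.

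\textbf{Step 5: Output.} Finally, invert the Kronecker map on the recovered exponents of $fg$ by base-$(2D+1)$ digit extraction, in $\widetilde O(Tn\log D)$ bit operations. Randomization enters only through the interpolation algorithm, giving a Monte Carlo algorithm within the stated bound.
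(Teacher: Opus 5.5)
Your skeleton matches the paper's route: Kronecker substitution with base $2D+1$, the height bound $\|fg\|_\infty\le TH^2$, the modular black box $\tilde f(\theta)\tilde g(\theta)\bmod m$, and the MBB interpolation of \cite{Giorgi2022-ISSAC}. But as written it proves only the supplied-output-sparsity version, whereas the statement gives you just $f$ and $g$.

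The interpolation theorem needs a valid sparsity bound for $h=fg$ as input. The natural a priori bound $\|f\|_0\|g\|_0$ can be of order $T^2$, which destroys quasi-linearity. A guessed bound does not help on its own: when the guess is too small, the algorithm returns some polynomial with no indication that it is wrong, because interpolation provides no stopping criterion. This is the missing idea. The paper closes it as follows:
\begin{itemize}
\item It guesses $t=2^k$ for $k=0,1,\dots$ and runs the supplied-$t$ algorithm $r=\lceil\log_2(2/\varepsilon)\rceil$ times per guess.
\item It outputs a candidate only if the sparse product verification of \cite{GIORGI202398-JSC} accepts it; that verification costs $\softO(T\log(DH)\polylog(1/\beta))$.
\item A union bound over the $O(r\log T)$ verification calls, together with the geometric growth of $t$, gives success probability $1-\varepsilon$ at quasi-linear expected cost.
\end{itemize}

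Your Step 4 also fails as proposed, which you half-anticipate by flagging it as unverified. In \cite{Giorgi2022-ISSAC} the root of unity has prime order of size about $T\log D$ (the paper records $\softO(T\log(DH))$), not $\softO(T)$. Besides $2U$ consecutive powers of $\omega$ in a small field $\F_q$, the algorithm asks for $U$ powers of a lifted root $\omega_k$ and $U$ shifted points $(1+q^k)\omega_k^i$, both in $\Z/q^{2k}\Z$. Elements of that ring have $\softO(\log(DH))$ bits. This breaks your batching in two ways:
\begin{itemize}
\item Reducing exponents modulo $N$ forces $N$ to be a multiple of that prime order. Your dense DFT in the lifted ring then costs about $T\log D\cdot\log(DH)$ bit operations, an extra logarithmic factor that the soft-O does not absorb.
\item The shifted points are not powers of a root of unity, so exponent reduction does not apply to them. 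Computing each $(1+q^k)^e$ by repeated squaring brings back the $O(n\log D)$ multiplications per term that you wanted to avoid.
\end{itemize}
The paper evaluates the sparse inputs only at the $U$ or $2U$ requested powers, by a transposed-Vandermonde product (Fact 3.2 of \cite{Giorgi2022-ISSAC}). This costs $\softO(\|f\|_0+\|g\|_0+U)$ ring operations per round. It handles the shift via $(1+q^k)^e\equiv 1+eq^k\pmod{q^{2k}}$, which turns shifted evaluation into the same transposed-Vandermonde problem with cheaply computed coefficients $c_e(1+eq^k)$.
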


The proof consists of three steps.
First, we present a quasi-linear algorithm for univariate polynomials under the assumption that the output sparsity is known. 
Different from \cite{Giorgi2020-ISSAC}, Theorem \ref{th-m1} uses
MBB sparse interpolation theorem given in \cite{Giorgi2022-ISSAC}.  Once valid degree, height, and sparsity bounds for the unknown product are supplied, multiplication can be implemented by answering the interpolation oracle through evaluations of the
two input polynomials. 
This gives the supplied-output-sparsity form of Theorem \ref{th-m1}. 
When the output sparsity is not supplied, interpolation alone does not provide a stopping criterion.
Classic product verification methods and Krnonecker substitution in
\cite{Giorgi2020-ISSAC} are then used to certify the guessed product and to provide a final proof of Theorem \ref{th-m1}.

Over the finite fields, the following result is proven.
\begin{theorem}
\label{th-m2}
For a finite field $\Fq$ with characteristic no less than $O(D^n)$, there exists an algorithm that accepts two sparse polynomials $f, g \in \F_q[x_1,\ldots,x_n]\setminus\F_q$ as input and gives the product $fg$ using 
$\widetilde{O}(Tn\log(D)\log(q))$ bit operations,  
where 
$$
D = \max_{i=1}^n\{\deg(f,x_i),\deg(g,x_i)\},
T = \max(\|f\|_0, \|g\|_0, \|fg\|_0)).$$
\end{theorem}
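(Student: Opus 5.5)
We reduce to the univariate case with a Kronecker substitution and then run black-box sparse interpolation over $\Fq$, answering each oracle query by evaluating the two sparse inputs.

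\textbf{Reduction to one variable.} Map $x_i \mapsto x^{(2D+1)^{i-1}}$. Every exponent vector of $f$, $g$ and $fg$ has entries at most $2D$, so this map is injective on the supports of all three. We get univariate $F,G\in\Fq[x]$ with $\deg(FG) < (2D+1)^n = D'$ and $\log D' = O(n\log D)$. Computing $F$ and $G$ costs $\widetilde{O}(Tn\log D)$ bit operations, and inverting the substitution on the product costs the same. The characteristic hypothesis ``no less than $O(D^n)$'' should be read as $\charac(\Fq) > D'$ after fixing constants. This is exactly what univariate sparse interpolation needs to recover exponents from derivative or Prony-type information.

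\textbf{Interpolation with a supplied bound.} Suppose first that a bound $S\ge \|FG\|_0$ is given. We invoke the sparse interpolation method of \cite{Huang2019-ISSAC} for polynomials over large-characteristic fields. Its queries are evaluations of the unknown $h=FG$, possibly together with $h'$ or $xh'$, at points $\omega^j$ of small multiplicative order, or modulo $x^p-1$ for small primes $p$. We answer each query from $F$ and $G$ alone:
\begin{itemize}
\item $(FG) \bmod (x^p-1)$ is obtained by reducing $F$ and $G$ modulo $x^p-1$, in $O(T\log D')$ operations, and multiplying densely in $\widetilde{O}(p)$ field operations;
\item $x(FG)' = (xF')G + F(xG')$ is handled the same way, since $xF'$ and $xG'$ are again $T$-sparse and are computable in $\widetilde{O}(T\log D')$ time.
\end{itemize}
The method uses $p=\widetilde{O}(S)$ and $O(\log D')$, or even polylogarithmically many, such primes. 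So the total cost is $\widetilde{O}((T+S)\log D')$ field operations. Each field operation costs $\widetilde{O}(\log q)$ bit operations, giving $\widetilde{O}(Tn\log D\log q)$ when $S=O(T)$. The large characteristic guarantees that a coefficient ratio $e_i c_i / c_i = e_i$ determines the exponent $e_i < D' \le \charac(\Fq)$ uniquely.

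\textbf{Unknown output sparsity.} We double the guess $S=2,4,8,\dots$. At each stage we interpolate a candidate $h_S$ and verify $h_S = FG$ probabilistically. Pick a random $\alpha$ in $\Fq$, or in an extension of degree $O(\log D')$ if $q$ is too small. Compare $h_S(\alpha)$ with $F(\alpha)G(\alpha)$ using repeated squaring, at cost $\widetilde{O}(T\log D')$ field operations. By Schwartz--Zippel the failure probability is at most $D'/q$; this is the reason for moving to an extension if needed, which inflates $\log q$ only by a polylog factor. We also abort a stage early if the interpolation output has more than $S$ terms. The costs form a geometric series that stops once $S\ge \|FG\|_0$, so the total stays $\widetilde{O}(T\log D'\log q)$.

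\textbf{Main obstacle.} The hardest step is making sure every query of \cite{Huang2019-ISSAC} can be answered from $F$ and $G$ within the budget. In particular, its cost must not depend on $\|FG\|_0$ through anything beyond $S$, and its exponent-recovery step must work with reductions modulo $x^p-1$ of the product rather than with true black-box evaluations at arbitrary points. If the method instead needs evaluations at high-order roots of unity, we would first try answering each such query by evaluating $F$ and $G$ directly, at cost $O(T\log D')$ per point. This is affordable only because the number of points is $O(S)$, which must be checked against the method's query count.
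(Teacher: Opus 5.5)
Your proposal follows essentially the paper's route. Both use Kronecker substitution. Both obtain the images $h_\ell$ and the derivative images of the product by folding the sparse inputs and their derivatives modulo $x^\ell-1$ and doing dense length-$\ell$ products. Both decode exponents from the ratio $e c_e/c_e$ using the large characteristic, pay a $\log q$ factor for bit cost, and use doubling plus verification when the output sparsity is unknown. The real difference is that the paper does not call \cite{Huang2019-ISSAC} as a black box. It proves the interpolation step directly, and this is exactly what settles your ``main obstacle''. One batch of $O(\log(T/\delta))$ primes, drawn from the first $\approx 8T\lceil\log_2 D'\rceil$ primes, isolates every term of $h$ with probability $1-\delta/2$ (Lemma~\ref{lem:prime-isolation}). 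Each such prime yields at most $T$ derivative-ratio candidates (Lemma~\ref{lem:derivative-candidates}). A second batch of ordinary images $h_\ell$, with primes chosen to isolate the now \emph{known} candidate set $C$, keeps exactly the true terms (Lemma~\ref{lem:candidate-filter}). Only reductions modulo $x^\ell-1$ are ever needed, so no evaluations at high-order roots of unity arise.

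Two details of your write-up need correcting. First, your parameters have the two factors swapped. The primes must be of size about $S\log D'$, while only $O(\log(S/\delta))$ of them are needed. Primes of size $S\polylog(S)$ do not suffice: once $\log D'$ is large compared with $S$, all exponents may be multiples of the product of every prime in that range, and then no such prime separates any two terms. The total length of the images is the same either way, so your cost bound survives. Second, the fallback you describe is not affordable. $O(S)$ independent sparse evaluations at $O(T\log D')$ each cost $O(ST\log D')$, which is quadratic; one would need batched transposed-Vandermonde evaluation as in Section~\ref{sec-Mult-I}. Finally, your verification by a random evaluation point is a valid, more elementary substitute for the verifier of \cite{GIORGI202398-JSC} invoked in Remark~\ref{rem-ff1}. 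The hypothesis forces $q>D'$, so a degree-$2$ extension already bounds the error by $D'/q^2<1/q$.
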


Theorem \ref{th-m2} follows a different route from that of Theorem \ref{th-m1}.  
It uses sparse interpolation over large-characteristic fields (\({\rm char}(\Fq)>O(D^n)\)) in the style of Huang \cite{Huang2019-ISSAC}: derivative images produce candidate exponents, and ordinary cyclic images filter the candidates.  
Dense cyclic products are charged using the Cantor-Kaltofen multiplication theorem \cite{CantorKaltofen1991}. 
For multivariate cases, directly applying the classical Kronecker substitution converts a $D$-degree multivariate polynomial into a univariate polynomial of degree $D^n$. The subsequent computation relies on differentiation, which requires the characteristic of the finite field to be greater than the degree of the univariate polynomial, that is, $\operatorname{char}(\mathbb{F_q}) > D^n$. For the univariate case, the requirement reduces to $\operatorname{char}(\mathbb{F}) > D$, which is much more reasonable.


The main results presented in this paper are obtained through an interactive collaboration between the authors and an artificial intelligence agent system, \textit{MechMath Agent Team (MMAT)}~\cite{MMAT}. 
The authors assume full responsibility for the paper’s content.
MMAT is a large language model driven agent designed to prove mathematical theorems expressed in both natural language and formal language in Lean. 

%
%
%
%

\subsection{Lean Formalization}
\label{subsec:lean-formalization}
Using the agent system MechMath Agent Team (MMAT)~\cite{MMAT}, we also developed a Lean formalization of the main algebraic and complexity claims with Lean~4.29.0~\footnote{\url{https://github.com/EonMath/sparse-product}}.
The formalization covers the supplied-output-sparsity integer theorem, Theorem~\ref{th-ql1}, and the large-characteristic finite-field sparse-product theorem, Theorem~\ref{thm:fq-product}, by proving the internal algebraic and cost-accounting lemmas and treating the large external algorithmic results as named axioms.  
It also records the univariate integer reduction theorem
\[
  \texttt{SparseProduct.SparseProductCyclicImage.integerSparseProductReduction},
\]
which is the Lean counterpart of the finite-field route for integer coefficients. 
The multivariate Kronecker-substitution wrapper and the product-verification wrapper for unknown output sparsity are not yet formalized, with only the univariate supplied-output and finite-field core theorems exposed in \texttt{SparseProduct.lean}.

For Theorem~\ref{th-ql1}, the Lean theorem \texttt{SuppliedOutputSparsity.suppliedOutputSparsityTheorem} is the formal statement of the displayed \(\widetilde O(T\log(DH))\) bit-complexity bound.
The supporting Lean file \texttt{SparseProduct/SuppliedOutputSparsity.lean} formalizes the monomial case, product height bound, degree decomposition, modular black-box correctness, shifted-root identity, shifted evaluation identity, and geometric probe-cost summation.
The two external interpolation facts from Giorgi--Grenet--Perret du Cray--Roche~\citep{Giorgi2022-ISSAC} are imported as the named axioms \texttt{ggprTheorem34} and \texttt{ggprFact32}.

For Theorem~\ref{thm:fq-product}, the Lean theorem \texttt{SparseProduct.SparseProductCyclicImage.finiteFieldSparseProduct} is the top-level finite-field statement.
The file \texttt{SparseProduct/SparseProductCyclicImage.lean} proves the cyclic-image identities, derivative candidate construction, bad-prime counting,
prime-isolation scaffold, and candidate-filtering scaffold.
The final Monte Carlo finite-field and integer sparse-product statements rely on the named randomized-algorithm axioms in \texttt{SparseProduct/RandomizedAlgorithms.lean}; three additional local axioms package the remaining prime-density and uniform finite-set sampling probability steps~\citep{agrawal2004primes,CantorKaltofen1991,HardyWright2008}.
Thus, the retained development has no anonymous \texttt{sorry} holes; its unproved assumptions are explicitly named axioms.  
Further details are given in Appendix~\ref{app:lean-formalization}.

\subsection{Related Work}

Polynomial multiplication has followed three rather different lines: dense multiplication, positive coefficient multiplication, and sparse multiplication.  

The classical multiplication algorithm for dense univariate polynomials is quadratic in degree. 
The first subquadratic divide-and-conquer methods go back to
Karatsuba and Ofman \cite{KaratsubaOfman1962}; Toom--Cook multiplication
\cite{Toom1963,Cook1966} generalizes the same idea by evaluating the input polynomials at several points, multiplying the values, and interpolating.
Finally, the Schonhage-Strassen method
\cite{SchonhageStrassen1971} and the Cantor-Kaltofen method \cite{CantorKaltofen1991} provide quasi-linear algorithms
using FFT when suitable roots of unity are available. 

Positive coefficients lead to a distinct output-sensitive regime.
For $f,g\in\mathbb Z_{+}[x]$ with $T=\#(fg)$, $H = \max(\|f\|_\infty, \|g\|_\infty)$, and degree bound $D$,
Cole and Hariharan~\cite{Cole2002STOC}
obtain an algorithm with complexity $O(T \log D)$;
Jin and Xu prove that $fg$ can be computed by a Las Vegas randomized algorithm using $O(T\log H)$ word operations~\cite{Jin2025STOC}. 
%
Fischer obtains the corresponding linear bound in a real-RAM model~\cite{NickFischer2025SODA}. 
As illustrated in Example \ref{ex-01}, these findings do not directly resolve the complexity question in the integer case with signed coefficients, since there are no term cancelations in the positive coefficient case.

For the sparse setting, Arnold's interpolation-and-testing framework provides a randomized output-sensitive sparse multiplication algorithm using $\tilde{O}(T\log^3 D)$ ring operations~\cite{Thesis-Arnold}.
Over $\mathbb{Z}$, Nakos provides an algorithm with quasi-linear complexity in word operations~\cite{9076060}. However, its bit complexity is $\tilde{O}(T\log^2D\log(HD))$~\cite{Giorgi2020-ISSAC}. 
Giorgi, Grenet, and Perret du Cray \cite{Giorgi2020-ISSAC} claimed a quasi-linear algorithm over the integers, but Section~\ref{sec-ce} presents a counterexample.
%
A modular product verification algorithm is provided~\cite{GIORGI202398-JSC}.

The rest of the paper is organized as follows.
In Section \ref{sec-ce}, a counterexample to Lemma 4.7(ii)   of \cite{Giorgi2020-ISSAC} is presented.
In Section \ref{sec-Mult-I}, the multiplication algorithm over integers is given.
In Section \ref{sec-Mult-I}, the multiplication algorithm over finite fields is given.
In Section \ref{sec-Conc}, conclusions are presented.

\section{The Failed Probability Claim}
\label{sec-ce}
In this section,  a counterexample is presented to Lemma 4.7(ii), thereby invalidating the complexity claim of the multiplication algorithm in \cite{Giorgi2020-ISSAC}.

For \(F,G\in\ZZ[X]\), the previously claimed result used, in Algorithm 2 Step 3~\cite{Giorgi2020-ISSAC}, a prime \(p\) sampled
from an interval and asserted that
\[
T_p=\max\{\#(F_pG_p),\#(F_pG'_p+F'_pG_p)\}\le \#(FG)
\]
with probability at least \(1-\mu_1/2\).  Here \(F_p,G_p,F'_p,G'_p\) denotes the
ordinary reduced polynomials used by the interpolation calls before the later
reduction modulo \(X^p-1\).  The following example refutes this probability
claim.


\begin{example}
\label{ex-counter}
        Let
\[
\mu_1=\frac12,\qquad N=10^{20},\qquad
F=(1-X^N)^4,\qquad G=(1+X^N)^4.
\]
Then \(F,G\in\ZZ[X]\), and
\[
FG=(1-X^{2N})^4.
\]
Therefore
\[
\#F=\#G=\#(FG)=5,\qquad D=\deg(FG)=8N.
\]
\end{example}

For this input, Step 3 of Algorithm 2 yields
\[
\lambda=\frac{20}{3\mu_1}(\#F\#G)^2\ln D
       =\frac{25000}{3}\ln(8\cdot10^{20})
       =401092.861679672913\ldots .
\]
Hence the integer primes in \([\lambda,2\lambda]\) are exactly the primes in
\[
[401093,802185].
\]
There are \(30182\) such primes.  Exact enumeration over this prime interval
gives the support-count table:
\begin{table}[h]
    \small
    \centering
    \begin{tabular}{c c c c}
        \toprule
        \#$(F_pG_p)$ & \#$(F_pG'_p+F'_pG_p)$ & $T_p$ & number of primes\\
        \midrule
        5 & 4 & 5 & 7495\\
        7 & 6 & 7 & 17698\\
        8 & 7 & 8 & 4989\\
        \bottomrule
    \end{tabular}
    \label{tab:supp-count}
\end{table}

Thus
\[
\Pr[T_p\le \#(FG)]
=\frac{7495}{30182}
=0.248326817308\ldots
<\frac34
=1-\frac{\mu_1}{2}.
\]
This contradicts the probability claim in Lemma 4.7 (ii).

The mechanism is that cancellations occurring in
\[
FG=(1-X^{2N})^4
\]
can split across different ordinary exponents after \(F\) and \(G\) are
reduced separately.  For example, for \(p=401101\),
\[
N\bmod p=277939
\]
and
\[
\begin{aligned}
F_pG_p={}&1-16X^{63230}+12X^{154777}+6X^{309554}
+12X^{464331}-16X^{555878}+X^{619108}.
\end{aligned}
\]
So \(\#(F_pG_p)=7>5=\#(FG)\).

The conclusion is limited to the written proof step.  Lemma 4.7(ii)'s
load-bearing probability statement is false under the uniform interval-prime
model used in that collision-counting argument.  Consequently, the published
proof of Theorem 1.1 does not establish the claimed output-sensitive complexity
bound over \(\ZZ\).  This does not rule out every possible repaired algorithm or alternative analysis.~\footnote{
We write a Maple program to present the details of the example, which is available in \url{https://github.com/EonMath/sparse-product}.
}

\section{Multiplication with Integer Coefficients}
\label{sec-Mult-I}
In this section, we prove Theorem \ref{th-m1}.
The supplied output sparsity case is presented as Theorem \ref{th-ql1} in Section \ref{sec-31}.
The general case is presented in Section \ref{sec-32} based on Theorem \ref{th-ql1}.

\subsection{Supplied Output-sparsity Multiplication}
\label{sec-31}
The integer result answers the supplied-output-sparsity form of Open
Problem 2.  The problem asks for an algorithm to multiply two sparse
polynomials \(f,g\in \Z[x]\) using
\[
  \softO\bigl(T\log(DH)\bigr)
\]
bit operations, where
\[
  D=\deg(fg),\qquad
  H=\max(\norm{f}_{\infty},\norm{g}_{\infty},\norm{fg}_{\infty}),
  \qquad
  T=\max(\norm{f}_{0},\norm{g}_{0},\norm{fg}_{0}).
\]
Here \(\norm{\cdot}_{0}\) denotes sparsity and \(\norm{\cdot}_{\infty}\)
denotes the largest absolute coefficient.  Changing the logarithm base changes
only absolute constants, which are absorbed by the soft-O notation.

\begin{theorem}[Supplied output sparsity]
\label{th-ql1}
Assume that the sparsity parameter \(T\) in Open Problem 2 is supplied to the
algorithm.  There is a Monte Carlo algorithm which returns the sparse
representation of \(fg\) with probability at least \(2/3\) and uses
\[
  \softO\bigl(T\log(DH)\bigr)
\]
bit operations.
\end{theorem}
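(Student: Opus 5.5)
We reduce Theorem~\ref{th-ql1} to the modular-black-box (MBB) sparse interpolation theorem of Giorgi--Grenet--Perret du Cray--Roche~\cite{Giorgi2022-ISSAC}. That theorem recovers an unknown $h\in\Z[x]$ from bounds $D_h\ge\deg h$, $H_h\ge\norm{h}_\infty$ and $T_h\ge\norm{h}_0$. The only access to $h$ is through an oracle that, given a modulus $m$ (a prime or a prime power, together with a ring element $\alpha\in\Z/m\Z$), returns $h(\alpha)\bmod m$. The interpolation uses $\softO(T_h\log(D_hH_h))$ bit operations plus the cost of the probes, where the probes have total size $\softO(T_h\log(D_hH_h))$ in the sense of the summation fact recorded as \texttt{ggprFact32}. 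We take $h=fg$. The supplied $T$ serves as $T_h$, we set $D_h=\deg f+\deg g$, and $H_h=H$ (or the a priori bound $T\norm{f}_\infty\norm{g}_\infty\le TH^2$ if $H$ is only known for the inputs, which changes $\log H$ by a constant factor plus $\log T$). The key point is that the oracle for $fg$ can be answered without knowing $fg$: we compute $f(\alpha)\bmod m$ and $g(\alpha)\bmod m$ and multiply them.

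The steps, in order, are as follows. First, we dispose of the degenerate case where $f$ or $g$ is a monomial; the product is then obtained by shifting exponents and scaling coefficients in $\softO(T\log(DH))$. Second, we fix the bounds $D_h,H_h,T_h$ as above and check that $\log(D_hH_h)=O(\log(DH)+\log T)$. Third, we implement the oracle. On a query $(m,\alpha)$ we reduce the coefficients of $f$ and $g$ modulo $m$. We compute each $\alpha^{e}\bmod m$ by repeated squaring, or better, by a simultaneous multi-exponentiation or subproduct-tree scheme, so that all $T$ powers cost $\softO(T\log D\log m)$. We then sum and multiply the two results. Fourth, we sum the oracle costs over all probes issued by the interpolation algorithm. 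Fifth, we collect the output and amplify the success probability of the Monte Carlo interpolation to $2/3$ by constant repetition and a majority vote, or simply by the interpolation theorem's own parameter.

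The main obstacle is the cost accounting in the fourth step. A single probe costs about $T\log D\log m$ plus reduction cost, and the interpolation algorithm issues many probes with moduli of varying sizes. A naive bound of (number of probes)$\times T\log D\log m$ would give $T^2$ or an extra $\log D$ factor. The plan is to use the structure of the GGPR algorithm: its probes come in geometric rounds in which the sparsity being targeted halves while the moduli have size about $\log(DH)/\text{(round-dependent factor)}$. We therefore need the per-probe cost to be proportional to the sparsity actually being recovered in that round, not to $T$. This does not hold for evaluating the full $f$ and $g$. We would first try to show that the total modulus bit-length over all probes is $\softO(\log(DH))$ times a polylogarithmic number of evaluations, so that each evaluation of $f$ and $g$ at cost $\softO(T\log m)$, with $\log D$ absorbed by reducing exponents modulo the multiplicative order, sums to $\softO(T\log(DH))$. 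This is the ``geometric probe-cost summation'' mentioned in the formalization. If the number of probes is instead $\Theta(T)$ in some round, we would batch the probes sharing a modulus via multipoint evaluation of the sparse polynomials $f\bmod(x^{r}-1)$, or via fast multipoint evaluation over $\Z/m\Z$, to recover quasi-linear cost.
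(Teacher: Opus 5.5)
Your reduction matches the paper's: you answer each modular black-box query for $h=fg$ by computing $f(\theta)g(\theta)\bmod m$, with the valid a priori height bound $T\norm{f}_\infty\norm{g}_\infty$. However, the step you yourself flag as the main obstacle, the probe-cost accounting, is not resolved, and neither of your proposed resolutions works.

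Your primary plan assumes only a polylogarithmic number of evaluation points, which is false. In a round with residual sparsity bound $U$, the GGPR algorithm needs $\Theta(U)$ points: the list $1,\omega,\dots,\omega^{2U-1}$ for the Prony-type step, plus $U$ unshifted and $U$ shifted points in the lifted ring. That is $\Theta(T)$ points overall. Evaluating the $T$-sparse inputs separately at each point costs $\Theta(T^2)$ ring operations, even after reducing exponents modulo the order.

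Your fallbacks do not repair this.
\begin{itemize}
\item Fast multipoint evaluation over $\Z/m\Z$ is quasi-linear in the \emph{degree}, not the sparsity. No quasi-linear algorithm is known for evaluating a $T$-sparse polynomial at $\Theta(T)$ arbitrary points.
\item Evaluating $f\bmod(x^p-1)$ as a dense polynomial costs $\softO(p)$ ring operations. Since $p$ must be of order $T\log D$ to avoid exponent collisions, and elements of $\Z/q^{2k}\Z$ have $\Theta(\log(DH))$ bits, this overshoots the target by roughly a $\log D$ factor.
\item Dense evaluation also does not apply to the shifted probes $(1+q^k)\omega_k^i$ at all. The element $1+q^k$ has multiplicative order $q^k$ modulo $q^{2k}$, so these points are not $p$-th roots of unity.
\end{itemize}

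The missing idea is to exploit the structure of the probes. They are consecutive powers of a root of unity, so for $P=\sum_{e\in E}c_ex^e$ one has $P(\omega^i)=\sum_{e}c_e(\omega^{e})^i$. This is a transposed-Vandermonde product, computable for $N$ consecutive values in $\softO((|E|+N)\log p)$ ring operations. That is exactly Fact~3.2 of \cite{Giorgi2022-ISSAC}, which you misread as a summation fact; it is the batched-evaluation tool you are missing.

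For the shifted list, the identity $(1+q^k)^e\equiv 1+eq^k\pmod{q^{2k}}$ turns $P((1+q^k)\omega_k^i)$ into $\sum_e c_e(1+eq^k)(\omega_k^e)^i$. This is the same transposed-Vandermonde problem with modified coefficients.

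Each round then costs $\softO((\norm{f}_0+\norm{g}_0+U)\log p)$ ring operations on elements of $\softO(\log(DH))$ bits. The bounds $U$ decrease geometrically over $O(\log T)$ rounds, so the total is $\softO(T\log(DH))$. Without these two ingredients your argument does not reach the stated bound.
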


\paragraph{Source theorem used.}
The proof uses Theorem 3.4 of Giorgi, Grenet, Perret du Cray, and
Roche~\cite{Giorgi2022-ISSAC}, in its modular-black-box interpolation form.  That
theorem supplies a Monte Carlo interpolation algorithm for an unknown integer
polynomial once valid degree, height, and sparsity bounds are supplied.  The
implementation of the black-box probes also uses Fact 3.2 of the same
paper~\cite{Giorgi2022-ISSAC}, which gives the batched root-of-unity evaluation used
below.  The optional discussion in which the output sparsity is not supplied
requires an additional sparse product verification wrapper, as described by \citet{GIORGI202398-JSC}.


\begin{algorithm}[H]
\caption{Multiplication of sparse polynomial with integer coefficients}
\begin{algorithmic}[1]
  \Require Sparse polynomials \(f,g\in\mathbb{Z}[x]\), a supplied sparsity bound
  \(T\ge \max(\Vert f\Vert_{0},\Vert g\Vert_{0},\Vert fg\Vert_{0})\), and
  \(D=\deg(fg)\ge 2\).
  
  \Ensure The sparse representation of \(fg\) with probability at least
  \(2/3\).
  
  \State If \(T=1\), multiply the two input monomials directly and return the result.
  
  \State Write \(f=\sum_{a\in A} f_ax^a\) and \(g=\sum_{b\in B}g_bx^b\). Set \(h=fg\) as the unknown target polynomial and compute
  \[
    H_0=\max\{2,\Vert f\Vert_{0}\Vert g\Vert_{0} \Vert f\Vert_{\infty}\Vert g\Vert_{\infty}\}.
  \]
  
  \State Give \(\text{Interpolate}\) the degree bound \(D\), height bound \(H_0\), and sparsity bound \(T\).
  
  \State When \(\text{Interpolate}\) asks for a modular black-box value at \((\theta,m)\), return
  \[
    \mathcal B_h(\theta,m)= \left(\sum_{a\in A} f_a\theta^a\right) \left(\sum_{b\in B} g_b\theta^b\right)\bmod m .
  \]
  
  \State Answer the root-of-unity probe lists used by \(\text{Interpolate}\) by batched sparse evaluations of \(f\) and \(g\), followed by pointwise multiplication in the ambient ring.
  
  \State Return the polynomial produced by \(\text{Interpolate}\).
\end{algorithmic}
\end{algorithm}

\begin{proof}
Put \(h=fg\).  Since \(D=\deg h\ge 2\), the product is nonzero and nonconstant,
and, because \(\Z[x]\) is an integral domain,
\[
  D=\deg f+\deg g .
\]
The sparse inputs therefore determine the degree bound \(D\).

First consider the possible monomial case \(T=1\).  Then both inputs have one
term, say \(f=ax^r\) and \(g=bx^s\).  The algorithm returns
\[
  ab\,x^{r+s}.
\]
The coefficient multiplication costs \(\softO(\log(H))\) bit operations,
and writing the exponent costs \(O(\log(D))\), where \(H = \max\bigl(\|f\|_{\infty}, \|g\|_{\infty}, \|fg\|_{\infty}\bigr)\).  
This is within \(\softO(T\log(DH))\).

It remains to treat \(T\ge 2\).  Write
\[
  f=\sum_{a\in A} f_a x^a,\qquad
  g=\sum_{b\in B} g_b x^b
\]
with nonzero displayed coefficients.  Define the input-computable height bound
\[
  \widehat H
  =
  \max\{1,\norm{f}_{0}\norm{g}_{0}
  \norm{f}_{\infty}\norm{g}_{\infty}\},
  \qquad
  H_0=\max\{2,\widehat H\}.
\]
Every coefficient of \(h\) is a sum of products \(f_ag_b\), hence
\[
  \norm{h}_{\infty}\le \widehat H\le H_0 .
\]
Moreover \(\norm{f}_{0},\norm{g}_{0},\norm{h}_{0}\le D+1\) and
\(\norm{f}_{\infty},\norm{g}_{\infty}\le H\), so
\[
  \widehat H\le (D+1)^2H^2 .
\]
This implies
\[
  \log(DH_0)=O(\log(DH)).
\]

We apply Theorem 3.4 of \cite{Giorgi2022-ISSAC} to the unknown polynomial \(\Phi=h\),
using degree bound \(D\), height bound \(H_0\), and sparsity bound \(T\).  In
the parameter range used here, Theorem 3.4 gives the Monte Carlo algorithm
\(\Interpolate\) for an unknown integer polynomial from modular black-box
access, provided valid bounds for its degree, height, and sparsity are supplied.
Its non-probe work is
\[
  \softO\bigl(T(\log D+\log H_0)\bigr).
\]

For any modulus \(m\ge 1\) and residue \(\theta\), define
\[
  \mathcal B_h(\theta,m)
  =
  \left(\sum_{a\in A} f_a\theta^a\right)
  \left(\sum_{b\in B} g_b\theta^b\right)\bmod m .
\]
Evaluation modulo \(m\) is a ring homomorphism \(\Z[x]\to \Z/m\Z\), and
therefore
\[
  \mathcal B_h(\theta,m)\equiv f(\theta)g(\theta)
  \equiv h(\theta)\pmod m .
\]
Thus \(\mathcal B_h\) is a valid modular black box for \(h\).

It remains to account for the cost of answering the structured probes used by
Theorem 3.4.  In an interpolation round with residual sparsity bound \(U\), the
algorithm asks for values of \(h-h^*\), where \(h^*\) is its current partial
reconstruction, on the following root-of-unity lists:
\[
  1,\omega,\omega^2,\ldots,\omega^{2U-1}
  \quad\text{in }\mathbb F_q,
\]
\[
  1,\omega_k,\omega_k^2,\ldots,\omega_k^{U-1}
  \quad\text{in }R=\Z/q^{2k}\Z,
\]
and
\[
  (1+q^k)\omega_k^i,\qquad 0\le i<U,
  \quad\text{again in }R.
\]
The values of \(h^*\) are handled internally by the interpolation algorithm.
We supply the values of \(h\) by evaluating \(f\) and \(g\) on the same lists
and multiplying the corresponding values in the ambient ring.

Let \(P(x)=\sum_{e\in E}c_ex^e\) be either input polynomial, with
\(s=|E|\le T\).  On an unshifted root-of-unity progression,
\[
  P(\omega^i)=\sum_{e\in E} c_e(\omega^e)^i .
\]
This is a transposed-Vandermonde product.  Fact 3.2 of \cite{Giorgi2022-ISSAC} computes
such consecutive sparse root-of-unity values in
\[
  \softO((s+N)\log_2 p)
\]
ring operations for \(N\) requested values after the standard blocking
argument: if \(N>s\), the list is handled in blocks, and repeated nodes are
combined before or during the matrix-vector product.  The same argument applies
over the lifted ring \(R=\Z/q^{2k}\Z\).

For the shifted list, set \(\alpha=1+q^k\).  For every exponent \(e\),
\[
  \alpha^e=(1+q^k)^e\equiv 1+e q^k\pmod {q^{2k}},
\]
because all binomial terms of order at least two are divisible by \(q^{2k}\).
Hence
\[
  P(\alpha\omega_k^i)
  =
  \sum_{e\in E} c_e(1+e q^k)(\omega_k^e)^i
  \quad\text{in }R,
\]
which is again the same transposed-Vandermonde evaluation problem, now with
modified coefficients.  Preparing these coefficients costs only
\(\softO(s\log(DH_0))\) bit operations in the rings used by Theorem 3.4.

Therefore, in one interpolation round, all probe values for both inputs and
their pointwise products are obtained in
\[
  \softO((\norm{f}_{0}+\norm{g}_{0}+U)\log p)
\]
ring operations, with the element bit sizes prescribed in the proof of
Theorem 3.4.  The source theorem chooses parameters satisfying
\[
  p=\softO\bigl(T\log(DH_0)\bigr),
  \qquad
  k\log q=\softO(\log(DH_0)).
\]
The residual bounds \(U\) decrease geometrically, so their sum over all rounds
is \(O(T)\).  The fixed input sparsities \(\norm{f}_{0},\norm{g}_{0}\le T\) are
paid in only a logarithmic number of rounds, which is absorbed by the soft-O
notation.  Thus the total cost of all black-box probe answers, pointwise
products, reductions, and shifted coefficient preparations is
\[
  \softO\bigl(T\log(DH_0)\bigr)
  =
  \softO\bigl(T\log(DH)\bigr).
\]
The non-probe work of Theorem 3.4 and the cost of writing the sparse output are
within the same bound.

All degree, height, and sparsity preconditions for Theorem 3.4 have now been
met:
\[
  D\ge \deg h,\qquad H_0\ge \norm{h}_{\infty},\qquad
  T\ge \norm{h}_{0}.
\]
Consequently, outside the failure event allowed by Theorem 3.4, the
interpolation algorithm returns the sparse representation of \(h=fg\).  This
is exactly the stated Monte Carlo guarantee.  The supplied-\(T\) theorem does
not claim a certified or Las Vegas algorithm.
\end{proof}

\subsection{General Sparse Polynomial Multiplication}
\label{sec-32}

In this section, we show how to achieve the general results based on Theorem \ref{th-ql1}.

\paragraph{If the output sparsity is not supplied.}
The theorem above uses a valid supplied value of \(T\).  If \(\norm{fg}_{0}\)
is not known in advance, one may instead try successive guesses for the output
sparsity and run the same interpolation method with each guess.  Such a wrapper
must verify each candidate product with an appropriate sparse product
verification routine, such as the product verification methods of
\cite{GIORGI202398-JSC}; without that verification step, Theorem 3.4 of
\cite{Giorgi2022-ISSAC} alone does not provide a stopping criterion for the
unknown-output-sparsity setting.  This paragraph is only a conditional wrapper
comment and is not part of the supplied-output-sparsity theorem above.

The following algorithm is used for verification: it computes different candidates for $fg$ with a growing sparsity bound and verifies each candidate using $\text{Sparse product verification}$.

\begin{algorithm}[H]
\caption{Sparse product verification~\citep[Algorithm 9]{GIORGI202398-JSC}}
\begin{algorithmic}[1]
  \Require Sparse polynomials \(f,g,h\in\mathbb{Z}[X]\) with \(\deg(h) \le \deg(f) + \deg(g)\), and a parameter \(0<\epsilon<1\).
  
  \Ensure \texttt{True} if \(h = fg\); otherwise \texttt{False} with probability at least \(1-\epsilon\).
\end{algorithmic}
\end{algorithm}

\begin{corollary}\cite{GIORGI202398-JSC}
Let \(f, g, h \in \mathbb{Z}[X]\) of degree at most \(D\), with norm  \(H = \max\bigl(\|f\|_{\infty}, \|g\|_{\infty}, \|h\|_{\infty}\bigr)\) and sparsity  \(T=\max(\norm{f}_{0},\norm{g}_{0},\norm{h}_{0})\). Then Algorithm 2 has bit complexity
\[
\softO\bigl(T(\log D + \log H)\polylog(1/\epsilon) \bigr).
\]
\end{corollary}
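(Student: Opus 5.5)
The plan is to reduce the identity test $h=fg$ in two independent random stages: first shrink the exponents, then shrink the coefficients. After both reductions the check becomes a dense computation whose size is quasi-linear in $T$, $\log D$ and $\log H$. Amplification by $O(\log(1/\epsilon))$ independent repetitions then gives the $\polylog(1/\epsilon)$ factor. Put $\Delta=fg-h\in\Z[X]$. The algorithm must answer \texttt{True} when $\Delta=0$, and must detect $\Delta\neq 0$ with constant probability per trial.

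\emph{Exponent reduction.} Choose a random prime $p$ in an interval $[\lambda,2\lambda]$ with $\lambda=\softO(T\log D)$. Replace $f,g,h$ by $f_p=f\bmod (X^p-1)$, $g_p$ and $h_p$; each costs $O(T)$ reductions of $O(\log D)$-bit exponents. Then compute $\Delta_p=f_pg_p-h_p\bmod (X^p-1)$ densely. This step is sound because $X\mapsto$ reduction modulo $X^p-1$ is a ring homomorphism, so $\Delta=0$ implies $\Delta_p=0$. For the converse I need a bound of the form
\[
\Pr_p[\Delta_p=0\mid \Delta\neq 0]\le \tfrac14 .
\]
This is the step I expect to be the main obstacle. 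Section~\ref{sec-ce} shows exactly how cancellations inside $fg$ can migrate between residue classes once $f$ and $g$ are reduced separately. A naive collision count over all $T^2$ products $f_ag_b$ would force $\lambda\approx T^2\log D$ and destroy linearity in $T$.

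My first attempt is to avoid counting collisions of the products altogether. Fix the largest exponent $e^\ast$ of $\Delta$. Observe that $\Delta_p=0$ forces every exponent class of $\Delta$ modulo $p$ to sum to zero. In particular the class of $e^\ast$ must contain another support element $e$ of $\Delta$ with $p\mid e^\ast-e$. The support of $\Delta$ can still have size up to $T^2+T$, however. I would therefore replace this argument by a second random choice: substitute $X\mapsto \beta X$ for a random unit $\beta$ in the coefficient field of the next stage, and follow the verification strategy of \cite{GIORGI202398-JSC}. This only requires that the specific nonzero combination survives, rather than that there be no collisions at all. Each fixed pair of colliding exponents is then excluded by the $O(\log D/\log\lambda)$ bad primes dividing its difference.

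\emph{Coefficient reduction.} Choose a random prime $q$ of bit size $O(\log(T\log D)+\log\log H)$, or a prime field admitting a suitable root of unity. Reduce all coefficients modulo $q$, and evaluate $f_p,g_p,h_p$ at a random point of $\F_q$ (or of an extension, if needed). The number of primes $q$ dividing all coefficients of the nonzero integer polynomial $\Delta_p$ is at most $\log\norm{\Delta_p}_\infty/\log q$, and $\log\norm{\Delta_p}_\infty=O(\log(TH))$. The Schwartz–Zippel bound over $\F_q$ handles the random evaluation point, since $\deg\Delta_p<p$.

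\emph{Cost accounting.} The reductions of exponents and coefficients cost $\softO(T(\log D+\log H))$. The evaluations at a single point modulo $X^p-1$ cost $O(T)$ operations on $O(\log q)$-bit numbers, together with $O(T\log p)$ exponentiations. Repeating $O(\log(1/\epsilon))$ times with fresh $p,q$ and evaluation points gives
\[
\softO\bigl(T(\log D+\log H)\polylog(1/\epsilon)\bigr).
\]
Primes in the required intervals can be sampled within the same bound. The delicate point remains the first stage: showing that the bad-prime count is linear rather than quadratic in $T$.
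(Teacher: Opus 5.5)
\paragraph{Gap in the exponent-reduction stage.} The paper gives no argument of its own for this corollary; it is imported from \cite{GIORGI202398-JSC}. Judged on its own, your proposal has a genuine gap exactly where you suspect one: the soundness of the exponent-reduction stage with $\lambda=\softO(T\log D)$ is never established.

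Your argument is that the top exponent $e^\ast$ of $\Delta=fg-h$ must share its residue class modulo $p$ with another support element. This bounds the number of bad primes by $(\norm{\Delta}_0-1)\log D/\log\lambda$. Since $\norm{\Delta}_0$ can be of order $T^2$, it yields $\Pr\le 1/4$ only for $\lambda$ of order $T^2\log D$.

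The $X\mapsto\beta X$ patch does not help. With $\beta$ in the small field $\F_q$ of the next stage, each class sum $\sum_{e\equiv r}c_e\beta^e$ has degree up to $D\gg q$, so Schwartz--Zippel gives nothing. On $\F_q^\ast$ only $e\bmod(q-1)$ matters, and such a sum can vanish identically. Making the patch work requires $q>D$. Then the $T$ powers $\beta^e$ cost $\softO(T\log^2 D)$ bit operations, which is precisely what exponent reduction was meant to avoid. Your closing sentence about excluding ``each fixed pair of colliding exponents'' is again a union bound over pairs, hence quadratic in $T$.

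There is also a related slip. Comparing $f_p(\alpha)g_p(\alpha)$ with $h_p(\alpha)$ at a random $\alpha\in\F_q$ does not test $\Delta_p=0$, because $f_pg_p$ has degree up to $2p-2$ and is not reduced modulo $X^p-1$. So in your plan the only genuine test of $\Delta_p$ is the dense product. Its cost $\softO(p\log q)$ is what forces $p$ to be small in the first place.

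\paragraph{The missing idea.} The bad-prime count need not be linear in $T$; instead, the cost must depend on $p$ only through $\log p$.

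Take $\lambda=\Theta(T^2\log D/\epsilon)$, so that the naive top-exponent count already gives the required probability. Then verify $f_pg_p\equiv h_p \bmod (X^p-1)$ without expanding the product. Write $f_pg_p=L+X^pU$ with $\deg L,\deg U<p$. The claim becomes $L+U=h_p$, an identity of degree $<p$ that can be tested at a random $\alpha\in\F_q^\ast$ with $q\gg p/\epsilon$.

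Write $f_p=\sum_a \phi_aX^a$ and $g_p=\sum_b\gamma_bX^b$. Then $U(\alpha)=\alpha^{-p}\sum_a \phi_a\alpha^a\sum_{b\ge p-a}\gamma_b\alpha^b$ is obtained from the sorted exponents of $g_p$ by suffix sums in $O(T\log T)$ operations, and $L(\alpha)=f_p(\alpha)g_p(\alpha)-\alpha^pU(\alpha)$. Every exponent now has $O(\log p)$ bits, so this check costs $\softO(T\log p\log q)$. The total per trial is then $\softO(T(\log D+\log H))$, dominated by reducing the exponents and coefficients.

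Verification modulo the sparse divisor $X^p-1$ of this kind is the tool the cited work relies on. With it in place, your coefficient-reduction and Schwartz--Zippel steps go through essentially as written, now with $q$ also of size $\gg p/\epsilon$.
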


We present a probabilistic algorithm for multiplying two sparse polynomials with integer coefficients when the sparsity \( \|fg\|_0\) of the product is unknown. The algorithm successively guesses \(t = 2^k\) for \(k = 0,1,2,\dots\), invokes a base multiplication routine (which succeeds with probability \(2/3\) if the guess is correct), and verifies each candidate product using a sparse product verification routine. By repeating the base multiplication \(\lceil \log_2(2/\varepsilon)\rceil\) times for each guess and using a final verification step, we drive the overall error probability below any prescribed \(\varepsilon > 0\). The complexity is quasi-linear in the input size and poly-logarithmic in \(1/\varepsilon\).


\begin{algorithm}[H]
\caption{Multiplication of sparse polynomials with integer coefficients (unknown sparsity)}
\begin{algorithmic}[1] 
  \Require Sparse polynomials \(f,g\in\mathbb{Z}[x]\), a target error bound \(0<\varepsilon<1\), and the degree \(D=\deg(fg)\) (or an upper bound on it).
  
  \Ensure The sparse representation of \(fg\) with probability at least \(1-\varepsilon\).
  
  \State Set \(\delta \leftarrow \frac{\varepsilon}{2}\).
  
  \State Compute \(k_{\max} = \bigl\lceil \log_2\bigl( \|f\|_0 \|g\|_0 \bigr) \bigr\rceil\). 
  
  \State Compute \(r = \bigl\lceil \log_2(1/\delta) \bigr\rceil\) and set \(\beta \leftarrow \frac{\varepsilon}{2(k_{\max}+1)r}\).
  
  \For{each \(k = 0, 1, 2, \dots, k_{\max}\)}
    \State Set \(t \leftarrow 2^k\).
    \For{\(r\) times}
      \State Run the base multiplication algorithm (Algorithm 1) with sparsity bound \(t\) and degree bound \(D\) to obtain a candidate product \(\tilde{h}\).
      \If{the verification algorithm (Algorithm 2) called with parameters \((f,g,\tilde{h}, \beta)\) returns \texttt{True}}
        \State Output \(\tilde{h}\) and stop.
      \EndIf
    \EndFor
  \EndFor
  
  \State If no candidate passes verification, return \texttt{Failure}. 
\end{algorithmic}
\end{algorithm}

\subsubsection{Probability Analysis}

Let \(\varepsilon \in (0,1)\) be the desired overall error bound.  
Define \(\delta = \varepsilon/2\) and let \(k_{\max}\) and \(r\) be as in the algorithm, with
\[
\beta = \frac{\varepsilon}{2(k_{\max}+1)r}.
\]

Let \(T = \max(\|f\|_0, \|g\|_0, \|fg\|_0))\).  
There exists a unique \(k^*\) such that \(2^{k^*-1} < T \le 2^{k^*}\) (or \(k^*=0\) if \(t^*=1\)).

\paragraph{Correct guess.}
For the correct guess \(t = 2^{k^*} \ge T\), the base multiplication algorithm (Algorithm 1) succeeds with probability at least \(2/3\).  
Repeating it \(r\) times independently, the probability that all trials fail is at most
\[
(1 - 2/3)^r = (1/3)^r \le 2^{-r} \le \delta.
\]
Hence, with probability at least \(1-\delta\), at least one trial produces the correct product \(fg\).

\paragraph{Verification error.}
The verification algorithm (Algorithm 2) with error bound \(\beta\) returns \texttt{True} for an incorrect candidate with probability at most \(\beta\).  
There are at most \((k_{\max}+1) \cdot r\) verification calls in total. By the union bound, the probability that \emph{any} incorrect candidate passes verification is at most
\[
(k_{\max}+1) \cdot r \cdot \beta = \frac{\varepsilon}{2}.
\]

\paragraph{Total failure probability.}
The algorithm fails if either:
\begin{itemize}
  \item the correct guess never produces a correct candidate (probability \(\le \delta = \varepsilon/2\)), or
  \item an incorrect candidate passes verification and is output (probability \(\le \varepsilon/2\)).
\end{itemize}
By the union bound, the overall failure probability is at most \(\varepsilon\). Therefore,
\[
\Pr[\text{output} = fg] \ge 1 - \varepsilon.
\]

\subsubsection{Complexity Analysis}

\paragraph{Base multiplication.}
For a guess $t = 2^k$, if $2^k \le T$, we may use the true sparsity $T$ as an upper bound for the complexity; if $2^k > T$, the complexity is bounded by $\softO(2^k (\log D + \log H))$.

\paragraph{Verification.}
Similarly, when $2^k \le T$, the verification complexity can be bounded using $T$ as $\softO(T (\log D + \log H)\polylog(1/\beta))$; when $2^k > T$, it is bounded by $\softO(2^k (\log D + \log H)\polylog(1/\beta))$.
Given
\(
1/\beta = O\!\left( \frac{\log T \cdot \log(1/\varepsilon)}{\varepsilon} \right),
\)
substituting this yields the complexity in terms of the original parameters:
\[
\softO\!\left( T (\log D + \log H) \cdot \operatorname{polylog}\!\left(1/\varepsilon \right) \right),
\]
and, when \( 2^k > T \),
\[
\softO\!\left( 2^k (\log D + \log H) \cdot \operatorname{polylog}\!\left( 1/\varepsilon \right) \right).
\]

\paragraph{Average-case complexity analysis.}
Consider the moment when the guess $2^k$ first exceeds the true sparsity $T$, i.e., $k = k^* + 1$.  
If the algorithm stops at this step, the total computational cost is 
\[
C_{k^*+1} = \softO\!\bigl(T (\log D + \log H) \cdot \operatorname{polylog}\!\left(1/\varepsilon \right)+ 2^{k^*+1} \cdot r \cdot (\log D + \log H) \bigr).
\]
If it instead stops at $k = k^* + 2$, then all previous guesses (including $k = k^*+1$) must have failed to produce a correct candidate that passes verification.  
The probability that a given guess fails to stop is at most $\varepsilon/2$ , so the probability of reaching step $k^*+2$ is at most $\varepsilon/2$.  
More systematically, let $p$ be the failure probability per guess step. Then the probability that the algorithm first stops at $k = k^* + s$ is at most $(\varepsilon/2)^{s-1}$.

Thus the expected cost is bounded by
\[
\mathbb{E}[\text{cost}] \le T (\log D + \log H) \cdot \operatorname{polylog}\!\left(1/\varepsilon \right)+ \sum_{s=1}^{\infty} \bigl( (\varepsilon/2)^{s-1} \bigr) \cdot \softO\!\bigl( 2^{k^*+s} \cdot r \cdot (\log D + \log H)\operatorname{polylog}\!\left(1/\varepsilon \right) \bigr).
\]
Since $2^{k^*+s} = T \cdot 2^s$, the geometric series converges:
\[
\mathbb{E}[\text{cost}] \le \softO\!\left( T \cdot  (\log D + \log H) \cdot \operatorname{polylog}\!\left(1/\varepsilon \right)\cdot \sum_{s=1}^{\infty} \varepsilon^{s-1}\right).
\]
As  $\sum_{t=1}^{\infty} (\varepsilon)^{t-1} = \frac{1}{1-\varepsilon} = O(1)$.  
Therefore,
\[
\mathbb{E}[\text{cost}] = \softO\!\left( T \cdot  (\log D + \log H) \cdot \operatorname{polylog}\!\left(1/\varepsilon \right)\right).
\]

This shows that the average-case bit complexity matches the worst-case bound up to constants, and the geometric decay ensures that the contribution from large $k$ is negligible.

Hence, the algorithm achieves quasi-linear bit complexity in the input size without prior knowledge of \(T\), with an arbitrarily high success probability \(1-\varepsilon\).

\paragraph{Multivariate case.}

Using the Kronecker substitution \(x_i \mapsto X^{D^{i-1}}\) (with \(D\) larger than any exponent), a multivariate multiplication \(fg\) reduces to a univariate multiplication of two polynomials of degree \(\le D^n\).  
Applying the univariate algorithm and substituting back yields the same guaranties.  
The bit complexity becomes
\[
\tilde{\mathcal{O}}\!\left( T \cdot (n \log D + \log H)  \operatorname{polylog}\!\left(1/\varepsilon \right)\right),
\]
where \(T = \max(\#f,\#g,\#fg)\), \(D\) bounds the total degree, and \(C\) bounds the coefficients.  
This is optimal for integer coefficients up to polylogarithmic factors.

\section{Multiplication with Finite Field Coefficients}
\label{sec-Mult-F}
In this section, we prove Theorem \ref{th-m2}. Similar to Section \ref{sec-1}, we first prove the supplied output sparsity case in Theorem \ref{thm:fq-product} and the extend the result to the general case in Remark \ref{rem-ff1}.
%
%

The separated-variable \(\SoftO\) convention used below hides
polylogarithmic factors in \(T\), in iterated logarithms of the displayed degree
parameter, and in \(1/\delta\), but it does not hide an additional visible power
of \(L_D\).

\begin{theorem}[Large-characteristic finite-field sparse product: Supplied output sparsity]
\label{thm:fq-product}
Let \(f,g\in\Fq[x]\), put \(h=fg\), and suppose that
\[D = \max(\deg(f),\deg(g)),
  T\ge \max(\|f\|_0, \|g\|_0, \|fg\|_0)
\]
is supplied.  Let \(D_{\rm out}\) be a known upper bound for \(\deg h\), let
\(0<\delta<1\), and assume that
\[
  \charac(\Fq)>D_{\rm out}.
\]

Then there is a Monte Carlo algorithm that returns the exact sparse product
\(h\) with a probability of at least \(1-\delta\), using
\[
  \SoftO\!\left(T\log D\,\polylog(1/\delta)\right)
\]
unit-cost field operations in \(\Fq\).  In an explicit finite-field
representation, the corresponding bit complexity is
\[
  \SoftO\!\left(T\log D\log q\,\polylog(1/\delta)\right).
\]
\end{theorem}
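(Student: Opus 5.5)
Our plan follows Huang's derivative-based sparse interpolation \cite{Huang2019-ISSAC}, with every black-box image of \(h=fg\) computed from the sparse inputs. Fix a prime \(p\) and write \(h_p=h\bmod(x^p-1)\), and let \(h^{\prime}\) be the formal derivative. Then
\[
  h_p=(f\bmod(x^p-1))\,(g\bmod(x^p-1))\bmod(x^p-1),
  \qquad
  (xh^{\prime})_p=\bigl((xf^{\prime})_p\,g_p+f_p\,(xg^{\prime})_p\bigr)\bmod(x^p-1).
\]
The reductions \(f_p,g_p,(xf^{\prime})_p,(xg^{\prime})_p\) are computed term by term from the sparse inputs. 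This costs \(O(T)\) operations on exponents of \(O(\log D)\) bits and \(O(T)\) field operations; multiplying a coefficient by an exponent \(e\) uses \(e\bmod \charac(\Fq)\). The two cyclic products are dense products of length \(p\). By the Cantor--Kaltofen theorem \cite{CantorKaltofen1991} they cost \(\SoftO(p)\) field operations, and the multiplication works over an arbitrary ring.

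We choose \(p\) at random among primes in \([\lambda,2\lambda]\) with \(\lambda=\Theta(T^2\log D)\), or \(\Theta(T\log D)\) if we only need most terms isolated. Suppose a term \(c_ex^e\) of \(h\) does not collide with any other term modulo \(p\). Then the coefficient of \(x^{e\bmod p}\) in \(h_p\) is \(c_e\), and the coefficient of the same monomial in \((xh^{\prime})_p\) is \(e\,c_e\). Since \(\charac(\Fq)>D_{\rm out}\ge e\), the quotient recovers the integer \(e\) exactly; this is the one place the characteristic hypothesis is used. A standard bad-prime counting argument applies: for fixed \(e\ne e^{\prime}\), the difference \(e-e^{\prime}\) has at most \(\log D\) prime divisors. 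Hence with constant probability at least half of the terms of the current residual are isolated.

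We iterate on the residual \(h-h^{*}\) as in the MBB interpolation used for Theorem \ref{th-ql1}. The images of \(h^{*}\) are subtracted directly, because \(h^{*}\) is sparse and known. Candidates \((e,c_e)\) may be spurious when collisions occur. We filter them with ordinary cyclic images modulo a few extra random primes \(p_1,\dots,p_r\): a candidate is kept only if its reduction matches the corresponding image coefficient consistently. The residual sparsity then halves with constant probability per round. Using \(O(\log T)\) rounds and \(O(\log(1/\delta))\)-fold repetition or majority voting per round, the total cost is \(\SoftO(T\log D\,\polylog(1/\delta))\) field operations, since \(p=\SoftO(T\log D)\). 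Multiplying by the \(\SoftO(\log q)\) cost of a field operation gives the stated bit bound.

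The main obstacle is the filtering step. We must show that a spurious candidate produced by a collision, which can be a nonzero sum of several terms whose derivative ratio is some field element, survives the filtering with small probability. We must also show this without inflating the candidate set beyond \(O(T)\) or adding extra \(\log D\) factors. Our first attempt is to reject any candidate whose recovered ratio is not an integer in \([0,D_{\rm out}]\). Then we would apply a union bound over pairs of (candidate, true exponent), counting primes \(p_i\) dividing their difference, so that the number of bad primes per pair is at most \(\log D_{\rm out}\). Sampling the \(p_i\) from an interval of size \(\Theta(T\log D/\delta^{\prime})\) then keeps each filtering prime within \(\SoftO(T\log D)\).
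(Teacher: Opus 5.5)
\textbf{Comparison with the paper's route.} Your candidate-generation step is the paper's. You fold $f,g$ and their derivatives (you use $xf'$; the paper uses $f'$ and reads the bucket $r-1\bmod\ell$). You form the dense cyclic products by Cantor--Kaltofen. You decode an isolated exponent from the ratio of bucket coefficients, which is where $\charac(\Fq)>D_{\rm out}$ is used.

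The routes split after that. You iterate on residuals $h-h^*$, so you must control spurious candidates coming from collision sums. The paper makes a single non-adaptive pass instead. It samples $O(\log(T/\delta))$ generation primes of size $\SoftO(TL_D)$ at once, so that with probability $1-\delta/2$ every exponent of $h$ is isolated by at least one of them. Hence $S\subseteq C=\bigcup_\ell C_\ell$ with $|C|=O(T\log(T/\delta))$, and a candidate set of that size is harmless under $\SoftO$. Because $C$ is \emph{known} and contains $S$, filtering only needs, for each $e\in C$, a prime isolating $e$ inside $C$. The algorithm can check this condition. For such a prime, the bucket $e\bmod\ell$ of $h_\ell$ holds exactly $c_e$ if $e\in S$ and $0$ otherwise. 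This is the idea that removes what you call the main obstacle: no collision sum is ever tested probabilistically, and there are no rounds.

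\textbf{What your version still needs.} Your route can be completed, but as written the filtering is analyzed on one side only. Whatever acceptance rule you adopt (agreement for all $i\le r$, or a majority), a \emph{true} isolated term can also be rejected. Modulo a filtering prime $p_i$ it collides with another residual exponent with the same probability $\delta'$ that you bound for spurious candidates. With the all-$i$ rule you therefore need $\delta'=O(1/r)$, i.e.\ an interval containing $\Theta(rT\log D)$ primes. You also need $r=O(\log(T/\delta))$, so that $(\delta')^{r}$ times the total number of candidates is at most $\delta/2$. If instead $\delta'$ is a per-candidate failure budget of order $\delta/T$, the primes have size about $T^2\log D/\delta$. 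That loses both quasi-linearity in $T$ and the $\polylog(1/\delta)$ dependence.

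Similarly, halving with constant probability over $O(\log T)$ rounds succeeds only with probability $c^{O(\log T)}$. You need, for example, $O(\log(T/\delta))$ rounds in each of which every remaining residual term is recovered with probability at least $1/2$, conditionally on the history. A union bound over the at most $T$ terms then finishes. With these choices your iterative scheme also attains $\SoftO(T\log D\,\polylog(1/\delta))$, but the paper's one-shot candidate set avoids all of this bookkeeping.
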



\begin{remark}
\label{rem-ff1}
  In the finite field setting $\mathbb{F}_q[x]$, the same framework in Section \ref{sec-32} applies to generate Theorem \ref{thm:fq-product} to the case where the output sparsity is not provided. 
There exists a sparse multiplication algorithm that assumes a known sparsity bound $t$ for the product, as well as a sparse product verification algorithm whose complexity is $\tilde{O}\bigl(T(\log D + \log q)\operatorname{polylog}(1/\delta)\bigr)$ (see Algorithm 9, Corollary 5.15 and Corollary 5.16 in \cite{GIORGI202398-JSC}). Consequently, by following the identical guessing strategy $t = 2^k$ and using the verification routine as a filter, we can eliminate the requirement that the output sparsity $\|fg\|_0$ be provided as input. The resulting probabilistic algorithm achieves expected complexity $\tilde{O}\bigl(T\log D\log q\operatorname{polylog}(1/\delta)\bigr)$ and succeeds with probability at least $1-\delta$, exactly as in the integer coefficient case. 
\end{remark}

\subsection{Interpolation Ingredients}
In this section, we present the components of sparse polynomial interpolation that are employed in the algorithm.
The algorithm uses derivative images in the sparse-interpolation style of
\cite{Huang2019-ISSAC,Giorgi2021-ISSAC}.  The proof needed here is
self-contained: derivative images generate a short candidate list containing
the true support, and ordinary cyclic images then filter that list.

Put \(B=\max\{D_{\rm out}+1,2\}\) and
$
  L_D=\log\max(D_{\rm out}+1,2)
$.  If
\(S\subseteq\{0,\ldots,D_{\rm out}\}\) is finite, a prime \(\ell\) isolates
\(e\in S\) in \(S\) if no other element of \(S\) is congruent to \(e\) modulo
\(\ell\).

\begin{lemma}[Prime isolation sampler]
\label{lem:prime-isolation}
Let \(S\subseteq\{0,\ldots,D_{\rm out}\}\), let \(|S|\le M\), and let
\(0<\eta<1\).  There is a Monte Carlo sampler using
\[
  k=O(\log(\max(M,2)/\eta))
\]
primes, each of size \(\SoftO(M L_D)\), such that with probability at least
\(1-\eta\), every \(e\in S\) is isolated in \(S\) by at least one sampled
prime.
\end{lemma}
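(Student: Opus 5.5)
The plan is a standard bad-prime counting argument, followed by independent repetition and a union bound over the elements of \(S\).

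\textbf{Step 1: a single prime fails for few primes.} Fix \(e\in S\). A prime \(\ell\) fails to isolate \(e\) exactly when \(\ell\) divides \(|e-e'|\) for some \(e'\in S\setminus\{e\}\). Each nonzero difference \(|e-e'|\le D_{\rm out}\) has at most \(\log_2 D_{\rm out}\le L_D/\log 2\) distinct prime divisors exceeding \(1\). Hence the set of primes that fail to isolate \(e\) has size at most
\[
  (M-1)\,L_D/\log 2 .
\]

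\textbf{Step 2: choose the prime interval.} Let \(\lambda\) be such that \([\lambda,2\lambda]\) contains at least \(2ML_D/\log 2\) primes. By the prime number theorem (or Chebyshev-type bounds, as in the cited \citep{HardyWright2008}), \(\lambda=\Theta(ML_D\log(ML_D))\) suffices, so every prime has size \(\SoftO(ML_D)\). A prime drawn uniformly from the primes in \([\lambda,2\lambda]\) then fails to isolate the fixed \(e\) with probability at most \(1/2\).

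\textbf{Step 3: repetition and union bound.} Sample \(k\) primes independently and uniformly from this interval. For a fixed \(e\), the probability that no sampled prime isolates it is at most \(2^{-k}\). A union bound over the at most \(M\) elements of \(S\) gives total failure probability at most \(M2^{-k}\). Taking
\[
  k=\left\lceil \log_2(\max(M,2)/\eta)\right\rceil
\]
makes this at most \(\eta\), which is \(O(\log(\max(M,2)/\eta))\) as claimed.

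\textbf{Main obstacle: the sampler itself.} The delicate point is producing uniform random primes from \([\lambda,2\lambda]\) within the \(\SoftO\) budget. I would sample random integers in the interval and test each with AKS, or more practically a Monte Carlo primality test. By prime density, each trial succeeds with probability \(\Omega(1/\log\lambda)\), so \(O(\log\lambda\cdot\log(k/\eta))\) trials per prime suffice with high probability. Any residual failure probability from the primality test or from exhausting the trials is absorbed by running the argument with \(\eta/2\) in place of \(\eta\) and charging the other \(\eta/2\) to the sampling step. This is where the prime-density and uniform-sampling facts, which the formalization treats as named axioms, enter. A mildly nonuniform distribution still works, since Step 2 only needs the probability of landing in the bad set to be bounded by a constant below \(1\).
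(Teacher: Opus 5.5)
Your proposal is correct and is essentially the paper's argument. Both bound the primes that are bad for a fixed $e$ by $(M-1)$ times the number of prime divisors of a difference, draw from a pool of primes a constant factor larger so that one draw is bad with constant probability, then repeat independently, take a union bound over $S$, and use the prime number theorem to get size $\SoftO(ML_D)$. The only differences are cosmetic: the paper draws uniformly from the first $N=\lceil 8\max(M,2)\lceil\log_2 B\rceil\rceil$ primes (bad probability at most $1/4$, hence $4^{-k}$) rather than from the primes in $[\lambda,2\lambda]$, and it does not discuss how to implement the sampler, which your last paragraph supplies.
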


\begin{proof}
Choose
\[
  N=\lceil 8\max(M,2)\lceil\log_2 B\rceil\rceil
\]
and sample
\[
  k=C\lceil\log(\max(M,2)/\eta)\rceil
\]
primes independently and uniformly from the first \(N\) primes, where \(C\) is
an absolute constant large enough for the probability estimate below.  Fix
\(e\in S\).  If a sampled prime \(\ell\) fails to isolate \(e\), then
\(\ell\mid(e-e')\) for some \(e'\in S\setminus\{e\}\).  For fixed \(e'\), the
nonzero difference \(e-e'\) has absolute value less than \(B\), so it has at
most \(\log_2 B\) distinct prime divisors.  Therefore at most
\((M-1)\log_2 B\) among the first \(N\) primes are bad for this fixed \(e\).
By the definition of \(N\), a uniformly sampled prime from this list is bad
for \(e\) with probability at most \(1/4\).  Hence the probability that all
\(k\) sampled primes are bad for \(e\) is at most \(4^{-k}\), and the constant
\(C\) makes this at most \(\eta/\max(M,2)\).  A union bound over all
\(e\in S\) gives the claimed probability.  The \(N\)-th prime has size
\(\SoftO(M L_D)\) by the standard prime-number estimate for the growth of the
\(N\)-th prime \cite{HardyWright2008}.
\end{proof}

\begin{lemma}[Derivative candidate generation]
\label{lem:derivative-candidates}
Let
\[
  P=\sum_{e\in S} c_e x^e,\qquad
  S\subseteq\{0,\ldots,D_{\rm out}\},
\]
over \(K=\Fq\), with \(\charac(K)>D_{\rm out}\).  Suppose
\[
  P_\ell=P\bmod(x^\ell-1),\qquad
  P'_\ell=P'\bmod(x^\ell-1)
\]
are known.  One can form a candidate set
\(C_\ell\subseteq\{0,\ldots,D_{\rm out}\}\) with \(|C_\ell|\le |S|\), and
every term isolated by \(\ell\) has its exponent in \(C_\ell\).
\end{lemma}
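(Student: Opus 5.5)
The plan is to use the standard derivative trick. Write the reductions explicitly. Since \(x^e\equiv x^{e\bmod \ell}\) modulo \(x^\ell-1\), we have
\[
  P_\ell=\sum_{r=0}^{\ell-1}\Bigl(\sum_{e\in S,\ e\equiv r}c_e\Bigr)x^r .
\]
The derivative is \(P'=\sum_{e\in S} e\,c_e x^{e-1}\). To keep the residues aligned with those of \(P_\ell\), I will work with \(xP'=\sum_{e\in S} e\,c_e x^{e}\). Its reduction is computable from \(P'_\ell\): multiply by \(x\) and reduce modulo \(x^\ell-1\), which is just a cyclic shift of the coefficient vector. This gives
\[
  (xP')_\ell=\sum_{r}\Bigl(\sum_{e\equiv r} e\,c_e\Bigr)x^r .
\]
Note that \(e\) here is viewed in \(K\). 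Because \(\charac(K)>D_{\rm out}\ge e\), the map \(e\mapsto e\cdot 1_K\) is injective on \(\{0,\ldots,D_{\rm out}\}\), and we can recover \(e\) from its image.

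The construction is as follows. For each residue \(r\) with \(a_r:=[x^r]P_\ell\neq 0\), let \(b_r:=[x^r](xP')_\ell\) and compute \(u_r=b_r/a_r\in K\). If \(u_r\) equals \(e\cdot 1_K\) for some integer \(e\in\{0,\ldots,D_{\rm out}\}\) with \(e\equiv r\pmod \ell\), put that \(e\) into \(C_\ell\); this \(e\) is unique by injectivity. Otherwise discard \(r\).

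Identifying \(u_r\) with an integer requires the prime-field part of \(K\). If \(u_r\) lies in \(\mathbb F_p\subseteq \Fq\), take its canonical representative in \([0,p)\) and test whether it is at most \(D_{\rm out}\) and congruent to \(r\) modulo \(\ell\); otherwise discard it. This is exactly where \(\charac(K)>D_{\rm out}\) is used.

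Correctness follows from two checks.

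\emph{Isolated exponents are found.} If \(e\in S\) is isolated by \(\ell\), then the residue class \(r=e\bmod \ell\) contains only \(e\) from \(S\). Hence \(a_r=c_e\neq0\) and \(b_r=e\,c_e\), so \(u_r=e\), which passes the test, and \(e\in C_\ell\).

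\emph{The size bound holds.} Each residue \(r\) contributes at most one candidate, and only residues with \(a_r\ne0\) contribute. Such residues must contain at least one element of \(S\), so there are at most \(|S|\) of them, giving \(|C_\ell|\le|S|\). Non-isolated classes may produce spurious candidates or none at all, but the lemma allows this; the filtering through ordinary cyclic images is done later.

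The main obstacle is bookkeeping rather than mathematics. The statement supplies \(P'_\ell\), not \((xP')_\ell\), so one has to verify that the shift \(x^{e-1}\mapsto x^{e}\) commutes with reduction modulo \(x^\ell-1\). It does, because \(x\) is invertible modulo \(x^\ell-1\). One also has to handle \(e=0\) cleanly: the term \(c_0\) contributes to \(a_0\) but adds \(0\) to \(b_0\), and it is recovered correctly as \(u_0=0\) when isolated. The cost is \(O(\ell)\) field operations plus the representative extraction, which is what the later complexity accounting needs.
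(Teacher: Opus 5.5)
Your proposal is correct and is essentially the paper's proof. Your $[x^r](xP')_\ell$ is exactly the paper's $b_r$, the coefficient of $x^{(r-1)\bmod\ell}$ in $P'_\ell$. The decoding via $\charac(K)>D_{\rm out}$, the isolated-bucket argument (including $e=0$), and the one-candidate-per-nonzero-bucket count all match the paper.

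One small correction: the shift commutes with reduction because reduction modulo $x^\ell-1$ is a ring homomorphism, not because $x$ is invertible there.
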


\begin{proof}
For each residue \(r\) whose coefficient \(a_r\) in \(P_\ell\) is nonzero, let
\(b_r\) be the coefficient of \(x^{r-1\bmod \ell}\) in \(P'_\ell\).  Compute
\(\alpha_r=b_r/a_r\) in \(K\), and insert \(d\) into \(C_\ell\) when
\(\alpha_r\) is the prime-subfield image of a unique
\[
  d\in\{0,\ldots,D_{\rm out}\}
\]
with \(d\equiv r\pmod\ell\).

There is at most one inserted candidate for each nonzero bucket.  Each nonzero
bucket contains at least one element of \(S\), so the number of inserted
candidates is at most \(|S|\).  If \(e\in S\) is isolated modulo \(\ell\), then
the bucket \(r=e\bmod\ell\) has \(a_r=c_e\ne0\).  The derivative bucket at
\(r-1\bmod\ell\) has \(b_r=e c_e\), including the case \(e=0\), where
\(b_r=0\).  Since \(\charac(K)>D_{\rm out}\), the images of
\(0,\ldots,D_{\rm out}\) in the prime subfield are distinct.  Thus \(b_r/a_r\)
decodes uniquely to \(e\), and every isolated true exponent is inserted into
\(C_\ell\).
\end{proof}

\begin{lemma}[Candidate filtering]
\label{lem:candidate-filter}
Let \(P=\sum_{e\in S}c_ex^e\in\Fq[x]\), with
\(S\subseteq C\subseteq\{0,\ldots,D_{\rm out}\}\), where \(C\) is known.  With
probability at least \(1-\eta\),
\(\SoftO(\log(\max(|C|,2)/\eta))\) ordinary cyclic images \(P_\ell\), with
primes of size \(\SoftO(\max(|C|,1)L_D)\), recover exactly the terms of \(P\).
\end{lemma}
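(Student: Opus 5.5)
The plan is to reuse the isolation argument of Lemma~\ref{lem:prime-isolation}, applied to the \emph{known} candidate set \(C\) instead of the unknown support \(S\). Set \(M=\max(|C|,1)\). Run the sampler of Lemma~\ref{lem:prime-isolation} on \(C\) with failure parameter \(\eta\). It draws \(k=O(\log(\max(M,2)/\eta))\) primes \(\ell_1,\ldots,\ell_k\), each of size \(\SoftO(ML_D)\). With probability at least \(1-\eta\), every \(e\in C\) is isolated in \(C\) by at least one \(\ell_j\). The crucial point is that isolation in \(C\) implies isolation in \(S\), since \(S\subseteq C\).

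Condition on this good event and compute the cyclic images \(P_{\ell_j}=P\bmod(x^{\ell_j}-1)\). The recovery rule is then explicit. Because \(C\) is known, for every \(e\in C\) we can check which \(\ell_j\) isolate \(e\) in \(C\); this is a sorting/bucketing computation over \(C\) for each prime. Pick one such \(\ell_j\) and read off the coefficient \(a^{(j)}_{e\bmod \ell_j}\) of \(x^{e\bmod\ell_j}\) in \(P_{\ell_j}\).

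The correctness check runs as follows. The bucket \(e\bmod\ell_j\) of \(P_{\ell_j}\) is the sum of \(c_{e'}\) over \(e'\in S\) with \(e'\equiv e\). Because \(S\subseteq C\) and \(e\) is alone in its residue class within \(C\), this sum equals \(c_e\) if \(e\in S\) and \(0\) otherwise. So we declare \(e\) to be a term with coefficient \(a^{(j)}_{e\bmod\ell_j}\) exactly when that value is nonzero. This returns precisely the terms of \(P\), with no false positives: an \(e\in C\setminus S\) yields the value \(0\). Unlike Lemma~\ref{lem:derivative-candidates}, no derivative images and no characteristic assumption are needed.

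The number of images used is \(k=\SoftO(\log(\max(|C|,2)/\eta))\) and the primes have the stated size, which is exactly the claim. The main subtlety is making sure the isolation probability refers to \(C\) rather than \(S\). If the sampler only isolated elements of \(S\), a candidate \(e\in C\setminus S\) could collide with a true exponent and inherit its coefficient. Applying Lemma~\ref{lem:prime-isolation} to the set \(C\), with \(|C|\le M\), avoids this. The edge case \(C=\emptyset\) is trivial, since then \(P=0\).
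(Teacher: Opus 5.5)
Your proposal is correct and follows the paper's proof essentially step for step: you apply Lemma~\ref{lem:prime-isolation} to the known set \(C\) with failure parameter \(\eta\), select for each \(e\in C\) a prime isolating it inside \(C\), and use \(S\subseteq C\) to conclude that the selected bucket of \(P_\ell\) equals \(c_e\) for \(e\in S\) and \(0\) for \(e\in C\setminus S\). Your treatment of the empty case and your emphasis that isolation must hold inside \(C\), not just inside \(S\), also match the paper's argument.
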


\begin{proof}
If \(C=\varnothing\), then \(S=\varnothing\), so returning the zero polynomial
is correct. Otherwise, apply Lemma~\ref{lem:prime-isolation} to the known set
\(C\) with failure parameter \(\eta\).  On the isolation event, every \(e\in C\)
has a sampled prime \(\ell\) that isolates \(e\) inside \(C\).  Choose such an
\(\ell\) using the known exponents in \(C\).  Because \(S\subseteq C\), the
coefficient in the \(e\bmod\ell\) bucket of \(P_\ell\) is \(c_e\) when
\(e\in S\), and it is \(0\) when \(e\notin S\).  Keeping exactly those
candidates whose selected bucket coefficient is nonzero returns exactly \(P\).
The failure probability is at most \(\eta\).
\end{proof}

\subsection{Proof of Theorem \ref{thm:fq-product}}

For a prime \(\ell\), define
\[
  f_\ell=f\bmod (x^\ell-1),\qquad
  g_\ell=g\bmod (x^\ell-1).
\]
Because reduction modulo \(x^\ell-1\) is a ring homomorphism, the product image
is
\[
  h_\ell=f_\ell g_\ell\bmod (x^\ell-1).
\]
Also define
\[
  f'_\ell=f'\bmod (x^\ell-1),\qquad
  g'_\ell=g'\bmod (x^\ell-1).
\]
The formal derivative satisfies \((fg)'=f'g+fg'\), and reducing this identity
modulo \(x^\ell-1\) gives
\[
  h'_\ell=f'_\ell g_\ell+f_\ell g'_\ell \pmod {x^\ell-1}.
\]
Thus \(h_\ell\) and \(h'_\ell\) are obtained from the sparse inputs by folding
\(f,g,f',g'\) to length \(\ell\), performing dense cyclic products, and adding
the two derivative-product terms.  No inverse of \(g_\ell\) is used.

\begin{proof}[Proof of Theorem~\ref{thm:fq-product}]
If \(f=0\) or \(g=0\), return zero.  Otherwise let \(S=\supp h\).  Apply
Lemma~\ref{lem:prime-isolation} to the unknown set \(S\), with \(M=T\) and
failure budget \(\delta/2\).  The sampler itself uses only \(T\),
\(D_{\rm out}\), and \(\delta\).  For each sampled prime \(\ell\), compute
\(h_\ell\) and \(h'_\ell\) by the cyclic product and derivative formulas above,
then use Lemma~\ref{lem:derivative-candidates} to form \(C_\ell\).  Let
\[
  C=\bigcup_\ell C_\ell .
\]
On the isolation event from Lemma~\ref{lem:prime-isolation}, every element of
\(S\) lies in \(C\).  Also each \(C_\ell\) has size at most \(|S|\le T\), and
the number of sampled primes is \(O(\log(T/\delta))\), so
\[
  |C|=\SoftO(T\,\polylog(1/\delta))
\]
with no additional visible power of \(L_D\).

Run Lemma~\ref{lem:candidate-filter} on the known set \(C\) with failure budget
\(\delta/2\), using ordinary product images \(h_\ell=f_\ell g_\ell\).  If both
the generation-stage and filtering-stage isolation events occur, the output is
exactly \(h\).  The union bound gives total failure probability at most
\(\delta\).

It remains to record the cost.  For one modulus \(\ell\), folding the sparse
inputs and, during candidate generation, their formal derivatives costs
\(\SoftO(T)\) field operations plus exponent bookkeeping with \(L_D\)-bit
exponents.  Dense cyclic multiplication of length \(\ell\) costs
\(\SoftO(\ell)\) operations over the coefficient algebra by the
Cantor--Kaltofen dense multiplication theorem \cite{CantorKaltofen1991}.
Lemma~\ref{lem:prime-isolation} samples primes of size \(\SoftO(TL_D)\) in the
generation stage and \(\SoftO(|C|L_D)\) in the filtering stage.  The number of
sampled primes is polylogarithmic in \(T\) and \(1/\delta\).  Summing the dense
lengths and the sparse folding and bookkeeping costs gives
\[
  \SoftO\!\left(TL_D\,\polylog(1/\delta)\right)
\]
field operations under the separated-variable convention.  In an explicit
representation of \(\Fq\), field operations and prime-subfield decoding are
charged with a factor \(\log q\) up to hidden polylogarithmic factors, giving
the bit complexity
$
  \SoftO\!\left(TL_D\log q\,\polylog(1/\delta)\right).$
\end{proof}

\section{Conclusion}
\label{sec-Conc}

We first provide a counterexample to a key result of \cite{Giorgi2020-ISSAC},  thereby invalidating the quasi-linear sparse polynomial multiplication claim.
By applying the MBB interpolation algorithm, we can derive a quasi-linear algorithm for multiplying sparse polynomials with integer coefficients.
Finally, we introduce a sparse polynomial multiplication algorithm defined over the large characteristic finite field, which attains a bit complexity of $\widetilde{O}(T \log D \log H)$.

\section*{Acknowledgment}
This paper is supported by the Strategic Priority Research Program of CAS Grants XDA0480502 and XDA0480503, NSFC Grants 12288201 and 92270001.

\bibliographystyle{KLMM/klmm}   
\bibliography{Refs}

\begin{thebibliography}{22}
\providecommand{\natexlab}[1]{#1}
\providecommand{\url}[1]{\texttt{#1}}
\expandafter\ifx\csname urlstyle\endcsname\relax
  \providecommand{\doi}[1]{doi: #1}\else
  \providecommand{\doi}{doi: \begingroup \urlstyle{rm}\Url}\fi

\bibitem[Agrawal et~al.(2004)Agrawal, Kayal, and Saxena]{agrawal2004primes}
Manindra Agrawal, Neeraj Kayal, and Nitin Saxena.
\newblock Primes is in p.
\newblock \emph{Annals of Mathematics}, 160\penalty0 (2):\penalty0 781--793, 2004.
\newblock ISSN 0003486X.
\newblock URL \url{http://www.jstor.org/stable/3597229}.

\bibitem[Arnold(2016)]{Thesis-Arnold}
Andrew Arnold.
\newblock \emph{Sparse Polynomial Interpolation and Testing (PhD Thesis)}.
\newblock University of Waterloo, 2016.

\bibitem[Cantor and Kaltofen(1991)]{CantorKaltofen1991}
David~G. Cantor and Erich Kaltofen.
\newblock On fast multiplication of polynomials over arbitrary algebras.
\newblock \emph{Acta Inf.}, 28\penalty0 (7):\penalty0 693–701, October 1991.
\newblock ISSN 0001-5903.
\newblock \doi{10.1007/BF01178683}.
\newblock URL \url{https://doi.org/10.1007/BF01178683}.

\bibitem[Cole and Hariharan(2002)]{Cole2002STOC}
Richard Cole and Ramesh Hariharan.
\newblock Verifying candidate matches in sparse and wildcard matching.
\newblock In \emph{Proceedings of the Thiry-Fourth Annual ACM Symposium on Theory of Computing}, STOC '02, pp.\  592–601, New York, NY, USA, 2002. Association for Computing Machinery.
\newblock ISBN 1581134959.
\newblock \doi{10.1145/509907.509992}.
\newblock URL \url{https://doi.org/10.1145/509907.509992}.

\bibitem[Cook and Aanderaa(1969)]{Cook1966}
Stephen~A. Cook and Stål~O. Aanderaa.
\newblock On the minimum computation time of functions.
\newblock \emph{Transactions of the American Mathematical Society}, 142:\penalty0 291--314, 1969.
\newblock ISSN 00029947.
\newblock URL \url{http://www.jstor.org/stable/1995359}.

\bibitem[Davenport and Carette(2009)]{SparsityChallenges}
James~Harold Davenport and Jacques Carette.
\newblock The sparsity challenges.
\newblock In \emph{2009 11th International Symposium on Symbolic and Numeric Algorithms for Scientific Computing}, pp.\  3--7, 2009.
\newblock \doi{10.1109/SYNASC.2009.62}.

\bibitem[Fischer(2025)]{NickFischer2025SODA}
Nick Fischer.
\newblock Sumsets, 3sum, subset sum: Now for real!
\newblock In \emph{Proceedings of the 2025 Annual ACM-SIAM Symposium on Discrete Algorithms (SODA)}, pp.\  4520--4546, 2025.
\newblock \doi{10.1137/1.9781611978322.155}.
\newblock URL \url{https://epubs.siam.org/doi/abs/10.1137/1.9781611978322.155}.

\bibitem[Giorgi et~al.(2020)Giorgi, Grenet, and Cray]{Giorgi2020-ISSAC}
Pascal Giorgi, Bruno Grenet, and Armelle Perret~du Cray.
\newblock Essentially optimal sparse polynomial multiplication.
\newblock In \emph{Proceedings of the 45th International Symposium on Symbolic and Algebraic Computation}, ISSAC '20, pp.\  202–209, New York, NY, USA, 2020. Association for Computing Machinery.
\newblock ISBN 9781450371001.
\newblock \doi{10.1145/3373207.3404026}.
\newblock URL \url{https://doi.org/10.1145/3373207.3404026}.

\bibitem[Giorgi et~al.(2021)Giorgi, Grenet, and Perret~du Cray]{Giorgi2021-ISSAC}
Pascal Giorgi, Bruno Grenet, and Armelle Perret~du Cray.
\newblock On exact division and divisibility testing for sparse polynomials.
\newblock In \emph{Proceedings of the 2021 International Symposium on Symbolic and Algebraic Computation}, ISSAC '21, pp.\  163–170, New York, NY, USA, 2021. Association for Computing Machinery.
\newblock ISBN 9781450383820.
\newblock \doi{10.1145/3452143.3465539}.
\newblock URL \url{https://doi.org/10.1145/3452143.3465539}.

\bibitem[Giorgi et~al.(2022)Giorgi, Grenet, Perret~du Cray, and Roche]{Giorgi2022-ISSAC}
Pascal Giorgi, Bruno Grenet, Armelle Perret~du Cray, and Daniel~S. Roche.
\newblock Sparse polynomial interpolation and division in soft-linear time.
\newblock In \emph{Proceedings of the 2022 International Symposium on Symbolic and Algebraic Computation}, ISSAC '22, pp.\  459–468, New York, NY, USA, 2022. Association for Computing Machinery.
\newblock ISBN 9781450386883.
\newblock \doi{10.1145/3476446.3536173}.
\newblock URL \url{https://doi.org/10.1145/3476446.3536173}.

\bibitem[Giorgi et~al.(2023)Giorgi, Grenet, and {Perret du Cray}]{GIORGI202398-JSC}
Pascal Giorgi, Bruno Grenet, and Armelle {Perret du Cray}.
\newblock Polynomial modular product verification and its implications.
\newblock \emph{Journal of Symbolic Computation}, 116:\penalty0 98--129, 2023.
\newblock ISSN 0747-7171.
\newblock \doi{https://doi.org/10.1016/j.jsc.2022.08.011}.
\newblock URL \url{https://www.sciencedirect.com/science/article/pii/S0747717122000773}.

\bibitem[Grenet(2026)]{Grenet-H2026}
Bruno Grenet.
\newblock Fast polynomial computations with space constraints, 2026.
\newblock URL \url{https://arxiv.org/abs/2511.11267}.

\bibitem[Hardy and Wright(2008)]{HardyWright2008}
G.~H. Hardy and E.~M. Wright.
\newblock \emph{An Introduction to the Theory of Numbers}.
\newblock Oxford University Press, 6th edition, 2008.

\bibitem[Huang(2019)]{Huang2019-ISSAC}
Qiao-Long Huang.
\newblock Sparse polynomial interpolation over fields with large or zero characteristic.
\newblock In \emph{Proceedings of the 2019 International Symposium on Symbolic and Algebraic Computation}, ISSAC '19, pp.\  219–226, New York, NY, USA, 2019. Association for Computing Machinery.
\newblock ISBN 9781450360845.
\newblock \doi{10.1145/3326229.3326250}.
\newblock URL \url{https://doi.org/10.1145/3326229.3326250}.

\bibitem[Jin and Xu(2024)]{Jin2025STOC}
Ce~Jin and Yinzhan Xu.
\newblock Shaving logs via large sieve inequality: Faster algorithms for sparse convolution and more.
\newblock In \emph{Proceedings of the 56th Annual ACM Symposium on Theory of Computing}, STOC 2024, pp.\  1573–1584, New York, NY, USA, 2024. Association for Computing Machinery.
\newblock ISBN 9798400703836.
\newblock \doi{10.1145/3618260.3649605}.
\newblock URL \url{https://doi.org/10.1145/3618260.3649605}.

\bibitem[Karatsuba and Ofman(1962)]{KaratsubaOfman1962}
Anatolii~Alekseevich Karatsuba and Yu~P Ofman.
\newblock Multiplication of many-digital numbers by automatic computers.
\newblock In \emph{Doklady Akademii Nauk}, volume 145, pp.\  293--294. Russian Academy of Sciences, 1962.

\bibitem[{MechMath Team}(2026)]{MMAT}
{MechMath Team}.
\newblock Mechmath agent team.
\newblock Academy of Mathematics and Systems Science, Chinese Academy of Sciences, 2026.
\newblock URL \url{https://eonmath.github.io/mechmath}.

\bibitem[Nakos(2020)]{9076060}
Vasileios Nakos.
\newblock Nearly optimal sparse polynomial multiplication.
\newblock \emph{IEEE Transactions on Information Theory}, 66\penalty0 (11):\penalty0 7231--7236, 2020.

\bibitem[Roche(2018)]{Roche2018}
Daniel~S. Roche.
\newblock What can (and can't) we do with sparse polynomials?
\newblock In \emph{Proceedings of the 2018 ACM International Symposium on Symbolic and Algebraic Computation}, ISSAC '18, pp.\  25–30, New York, NY, USA, 2018. Association for Computing Machinery.
\newblock ISBN 9781450355506.
\newblock \doi{10.1145/3208976.3209027}.
\newblock URL \url{https://doi.org/10.1145/3208976.3209027}.

\bibitem[Sch{\"o}nhage and Strassen(1971)]{SchonhageStrassen1971}
A.~Sch{\"o}nhage and V.~Strassen.
\newblock Schnelle multiplikation gro{\ss}er zahlen.
\newblock \emph{Computing}, 7\penalty0 (3):\penalty0 281--292, 1971.
\newblock ISSN 1436-5057.
\newblock \doi{10.1007/BF02242355}.
\newblock URL \url{https://doi.org/10.1007/BF02242355}.

\bibitem[Toom(1963)]{Toom1963}
A.~L. Toom.
\newblock The complexity of a scheme of functional elements realizing the multiplication of integers.
\newblock \emph{Soviet Mathematics Doklady}, 3:\penalty0 714--716, 1963.

\bibitem[von~zur Gathen and Gerhard(2013)]{ModernCA}
Joachim von~zur Gathen and J{\"u}rgen Gerhard.
\newblock \emph{Modern Computer Algebra}.
\newblock Cambridge University Press, 2013.

\end{thebibliography}

\appendix
\clearpage

\section{Lean Formalization Details}
\label{app:lean-formalization}

This appendix summarizes Lean development at the level of mathematical
structure rather than source code.  Ordinary algebraic, combinatorial, and
cost-envelope obligations are proved inside the project.  Large external
algorithmic theorems and a small number of probability-library obligations are
exposed as named axioms.

\subsection{Files and Formal Targets}
In this subsection, we briefly introduce the content in each Lean file under \texttt{SparseProduct}.

\texttt{Foundations.lean} contains the shared definitions: polynomial sparsity \texttt{Polynomial.sparsity}, $T$-sparsity \texttt{Polynomial.IsTSparse}, integer coefficient height \texttt{Polynomial.linftyNorm}, the computable height bound \texttt{Polynomial.heightBound}, and the soft-O predicates \texttt{IsSoftO} and \texttt{SoftO}.

\texttt{SuppliedOutputSparsity.lean} contains the Lean counterpart of Theorem~\ref{th-ql1}.  
Specifically, its main theorem is \texttt{suppliedOutputSparsityTheorem}.

\texttt{SparseProductCyclicImage.lean} contains the finite-field construction used for Theorem~\ref{thm:fq-product}: cyclic images, derivative images, prime isolation, candidate generation, candidate filtering, and the two top-level result statements \texttt{SparseProduct.SparseProductCyclicImage.finiteFieldSparseProduct} and \texttt{SparseProduct.SparseProductCyclicImage.integerSparseProductReduction}.

\texttt{RandomizedAlgorithms.lean} contains named black-box Monte Carlo statements for finite-field sparse multiplication, randomized prime generation, and the integer reduction by finite-field lifting.

\subsection{Supplied Output-Sparsity Theorem}

The Lean formalization of Theorem~\ref{th-ql1} starts from the definitions
\[
  \texttt{Polynomial.sparsity},\qquad
  \texttt{Polynomial.linftyNorm},\qquad
  \texttt{Polynomial.heightBound}.
\]
It then proves the basic algebraic reductions used in the paper.  The theorem
\texttt{monomialProduct} covers the one-term case.  The theorem
\texttt{heightBound\_spec} proves that every coefficient of \(fg\) is bounded by
\[
  \norm{f}_0\norm{g}_0\norm{f}_\infty\norm{g}_\infty,
\]
and hence by the supplied height bound \(H_0\).  The theorem
\texttt{deg\_product} records the degree decomposition over \(\mathbb Z[x]\),
and \texttt{log\_DH0\_eq\_log\_DH} formalizes the replacement of
\(\log(DH_0)\) by \(\log(DH)\) inside soft-\(O\) notation.

The modular black-box used by the interpolation theorem is represented by
\texttt{modularBlackBox}; its correctness is proved in
\texttt{modularBlackBox\_correct}, using evaluation modulo \(m\) as a ring
homomorphism.  The shifted probes used in the modular-black-box interpolation
proof are covered by \texttt{shifted\_root\_identity} and
\texttt{shifted\_evaluation}.  Finally, the probe accounting is formalized
through \texttt{probeCost\_one\_round}, the geometric summation lemma
\texttt{sum\_div\_pow\_two\_le\_two\_mul}, and
\texttt{probeCost\_total}.

The top-level theorem
\[
  \texttt{SuppliedOutputSparsity.suppliedOutputSparsityTheorem}
\]
proves the soft-\(O\) bit-complexity conclusion corresponding to
Theorem~\ref{th-ql1}.  The interpolation theorem itself is not reproved:
Theorem~3.4 of Giorgi--Grenet--Perret du Cray--Roche~\cite{Giorgi2022-ISSAC}
is imported as \texttt{ggprTheorem34}, and Fact~3.2 of the same paper is
imported as \texttt{ggprFact32}.

\subsection{Finite-Field Theorem}

For Theorem~\ref{thm:fq-product}, the formalization separates the algebraic
image identities from the randomized algorithmic wrapper.  The cyclic image
\(P\bmod (x^\ell-1)\) is defined as
\[
  \texttt{SparseProduct.SparseProductCyclicImage.cyclicImage}.
\]
The theorems \texttt{cyclicImage\_mul} and
\texttt{cyclicImage\_derivative\_mul} formalize the identities
\[
  (fg)_\ell=f_\ell g_\ell \bmod (x^\ell-1),
  \qquad
  (fg)'_\ell=f'_\ell g_\ell+f_\ell g'_\ell \bmod (x^\ell-1).
\]
The coefficient formula for cyclic images is proved as
\texttt{cyclicImage\_coeff}.

The prime-isolation part is represented by the predicate
\texttt{IsolatesExponent}.  The difference bound
\texttt{exponent\_diff\_bound}, the prime-divisor bound
\texttt{primeFactors\_card\_le\_log2}, and the bad-prime count
\texttt{badPrimes\_card\_le} are proved.  The existence of a sufficiently
large sampling set of primes and the final probability estimate for uniform
sampling are left as the named axioms \texttt{exists\_sampling\_primeSet} and
\texttt{uniformOfFinset\_isolation\_failure\_bound}.

The generation of derivative-based candidate is implemented by
\texttt{candidateSet}; its specification theorem \texttt{candidateSet\_spec}
proves that the candidate set lies in \(\{0,\ldots,D_{\rm out}\}\), has the
expected cardinality bound, and contains every exponent isolated by the sampled
modulus.  Candidate filtering is summarized by
\texttt{exists\_candidate\_filter\_recovers}; its degenerate singleton
probability case is currently the axiom
\texttt{uniformOfFinset\_singleton\_recover\_bound}.

The top-level finite-field theorem is
\[
  \texttt{SparseProduct.SparseProductCyclicImage.finiteFieldSparseProduct}.
\]
It reduces to the black-box axiom
\[
  \texttt{monteCarlo\_sparseProduct\_finiteField}.
\]
The unit-cost field-operation variant is
\[
  \texttt{monteCarlo\_sparseProduct\_finiteField\_fieldOps},
\]
and the field-operation to bit-complexity conversion is proved by
\texttt{IsSoftO.mul\_logCard}.  The Lean theorem
\[
  \texttt{SparseProduct.SparseProductCyclicImage.integerSparseProductReduction}
\]
records the corresponding univariate integer reduction through random prime
selection, reduction to a large-characteristic finite field, and balanced
lifting back to \(\mathbb Z[x]\).  This reduction is packaged as the axiom
\texttt{monteCarlo\_sparseProduct\_integer}, with randomized prime generation
supplied by \texttt{monteCarlo\_primeInInterval}.

\subsection{Axiom Boundary}

The current development deliberately uses named axioms rather than hidden proof
holes.  The nine remaining axioms are:
\[
\begin{gathered}
\texttt{ggprTheorem34},\qquad
\texttt{ggprFact32},\qquad
\texttt{monteCarlo\_sparseProduct\_finiteField\_fieldOps},\\
\texttt{monteCarlo\_sparseProduct\_finiteField},\qquad
\texttt{monteCarlo\_primeInInterval},\\
\texttt{monteCarlo\_sparseProduct\_integer},\qquad
\texttt{exists\_sampling\_primeSet},\\
\texttt{uniformOfFinset\_isolation\_failure\_bound},\qquad
\texttt{uniformOfFinset\_singleton\_recover\_bound}.
\end{gathered}
\]
The first two are imported from the modular-black-box interpolation paper of
Giorgi--Grenet--Perret du Cray--Roche.  The next four package randomized
algorithmic statements whose full formalization would require substantial
probability and bit-complexity infrastructure.  The final three package
prime-density and uniform-sampling probability steps used in the prime-isolation
and filtering arguments.  A separate file, \texttt{docs/references.md}, records
the bibliographic sources associated with these axioms.

\end{document}